\pgfplotsset{compat = newest}
\newcommand{\cmark}{\ding{51}}%
\newcommand{\xmark}{\ding{55}}
\newcommand{\smark}{\ensuremath{\bm{\thicksim}}}
\newcolumntype{C}{>{\centering\arraybackslash}X}
\newcolumntype{R}{>{\raggedleft\arraybackslash}X} 
\newcolumntype{L}{>{\raggedright\arraybackslash}X} 
\newcolumntype{P}[2]{%
  >{\begin{turn}{#1}\begin{minipage}{#2}\raggedright}l%
  <{\end{minipage}\end{turn}}%
}
\newcolumntype{u}[1]{%
  >{\begin{minipage}{#1}\raggedleft}l%
  <{\end{minipage}}%
}
\newcommand{\mc}[1]{\mathcal{#1}}
\newcommand{\m}[1]{\mathsf{#1}}
\renewcommand{\log}{L}
\newcommand{\tup}[1]{\langle{#1}\rangle}
\newcommand{\events}{E}
\newcommand{\objects}{\mathcal O}
\newcommand{\otypes}{\Sigma}
\newcommand{\activities}{\mc A}
\newcommand{\timestamps}{\mathbb T}
\newcommand{\proj}{\pi}
\newcommand{\projact}{\proj_{\mathit{act}}}
\newcommand{\projobj}{\proj_{\mathit{obj}}}
\newcommand{\projtime}{\proj_{\mathit{time}}}
\newcommand{\projtrace}{\proj_{\mathit{trace}}}
\newcommand{\otype}{\vartype}
\newcommand{\pre}[1]{\bullet{#1}}
\newcommand{\post}[1]{{#1}\bullet}
\newcommand{\logtrace}{\mathbf e}
\newcommand{\goto}[1]{\mathrel{\raisebox{-2pt}{$\xrightarrow{#1}$}}}
\newcommand{\GG}{\mc G} 
\newcommand{\LL}{\mc L} 
\newcommand{\NN}{\mc N} 
\newcommand{\secref}[1]{Sec.~\ref{sec:#1}}
\newcommand{\exaref}[1]{Ex.~\ref{exa:#1}}
\newcommand{\defref}[1]{Def.~\ref{def:#1}}
\newcommand{\remref}[1]{Rem.~\ref{rem:#1}}
\newcommand{\lemref}[1]{Lem.~\ref{lem:#1}}
\newcommand{\figref}[1]{Fig.~\ref{fig:#1}}
\newcommand{\tabref}[1]{Tab.~\ref{tab:#1}}
\tikzstyle{place}=[draw, circle, inner sep=1.5pt, line width=.7pt, scale=.8, minimum width=6mm]
\tikzstyle{trans}=[draw, rectangle, inner sep=1.5pt, line width=.7pt, scale=.8, minimum width=6mm, minimum height=6mm, fill=gray!10]
\tikzstyle{arc}=[draw, ->, line width=.5pt]
\tikzstyle{fatarc}=[draw, ->, line width=.5pt, double]
\tikzstyle{action}=[scale=.6]
\tikzstyle{starttoken}=[regular polygon, regular polygon sides=3,minimum width=2mm,fill=black,inner sep=0pt,rotate=30]
\tikzstyle{endtoken}=[regular polygon, regular polygon sides=4,minimum width=2mm,fill=black,inner sep=0pt]
\colorlet{ocolor}{blue!70!green!20}
\colorlet{icolor}{yellow!90!red!20}
\colorlet{oicolor}{green!90!blue!20}
\tikzstyle{oplace}=[place,fill=ocolor]
\tikzstyle{otrans}=[trans,fill=ocolor]
\tikzstyle{iplace}=[place,fill=icolor]
\tikzstyle{itrans}=[trans,fill=icolor]
\tikzstyle{oiplace}=[place,fill=oicolor]
\tikzstyle{idplace}=[place,fill=red!30] 
\tikzstyle{dplace}=[place,fill=magenta!20]             
\tikzstyle{oidplace}=[place,fill=black!20]
\tikzstyle{oitrans}=[trans,fill=oicolor]
\tikzstyle{mixtrans}=[trans,shading = axis,rectangle, left color=ocolor, right color=icolor,shading angle=135]
\tikzstyle{oimixtrans}=[trans,shading = axis,rectangle, left color=ocolor, right color=oicolor,shading angle=135]
\tikzstyle{insc}=[scale=.8]
\tikzstyle{splitme}=[rectangle split, rectangle split horizontal,rectangle split parts=2]
\tikzstyle{log}=[fill=white]
\tikzstyle{model}=[fill=gray!10]
\newcommand{\objtuples}{\vec\objects}
\newcommand{\dom}{\mathit{dom}}
\newcommand{\vartype}{\funsym{type}}
\newcommand{\eventlog}{L}
\renewcommand{\log}{{\mathit{log}}}
\renewcommand{\mod}{{\mathit{mod}}}
\newcommand{\cost}{{\mathit{cost}}}
\newcommand{\run}{\rho}
\newcommand{\moves}{\mathit{moves}}
\newcommand{\SKIP}{{\gg}}
\newcommand{\restrlog}{|_{\mathit{log}}}
\newcommand{\restrmod}{|_{\mathit{mod}}}
\newcommand{\nuvarset}{\Upsilon}
\newcommand{\listvarset}{\varset_{\mathit{list}}}
\newcommand{\setsym}[1]{\mathit{#1}}
\newcommand{\places}{P}
\newcommand{\inflow}{F_{in}}
\newcommand{\outflow}{F_{out}}
\newcommand{\transitions}{T}
\newcommand{\varset}{\mathcal{V}}
\newcommand{\funsym}[1]{\mathtt{#1}}
\newcommand{\coloring}{\funsym{color}}
\newcommand{\set}[1]{\{#1\}}
\newcommand{\allvars}{\mathcal{X}}
\newcommand{\invars}[1]{\setsym{\setsym{vars}_{in}}(#1)}
\newcommand{\outvars}[1]{\setsym{\setsym{vars}_{out}}(#1)}
\newcommand{\vars}[1]{\setsym{vars}(#1)}
\newcommand{\colors}{\mathcal C}
\newcommand{\Dom}{dom}
\newcommand{\marking}{M}
\newcommand{\listtype}[1]{[{#1}]}
\newcommand{\tracenet}{T}
\newcommand{\eid}[1]{\mathtt{\#}_{#1}}
\newcommand{\mynet}{OPID\xspace}
\newcommand{\mynets}{OPIDs\xspace}
\newcommand{\anet}{an \mynet}
\newcommand{\yices}{\textsf{Yices}\xspace}
\newcommand{\cocomot}{\textsf{CoCoMoT}\xspace}
\newcommand{\thetool}{\textsf{oCoCoMoT}\xspace}
\newcommand{\skippath}[1]{\to_{#1}^{\SKIP}}
\newcommand{\eqn}{\,{=}\,}
\newcommand{\SW}[1]{{#1}}
\begin{document}
\title{Object-Centric Conformance Alignments\\ with Synchronization (Extended Version)\thanks{Gianola was partially supported by national funds through
FCT, Funda\c{c}\~{a}o para a Ci\^{e}ncia e a Tecnologia, under projects
UIDB/50021/2020 (DOI:10.54499/UIDB/50021/2020). Montali and Winkler acknowledge the UNIBZ project ADAPTERS and the PRIN MIUR project PINPOINT Prot.~2020FNEB27.}}
\titlerunning{Object-Centric Conformance Alignments with Synchronization}
%
\author{Alessandro Gianola\inst{1} \and Marco Montali\inst{2} \and Sarah Winkler\inst{2}}
%
%
\institute{INESC-ID/Instituto Superior T{\'e}cnico, Universidade de Lisboa, Portugal \email{alessandro.gianola@tecnico.ulisboa.pt} \and Free University of Bozen-Bolzano, Italy \\ \email{\{montali,winkler\}@inf.unibz.it}}
\maketitle              
\begin{abstract}
Real-world processes operate on objects that are inter-de\-pen\-dent. To accurately reflect the nature of such processes, object-centric process mining techniques are needed, notably conformance checking.
However, while the object-centric perspective has recently gained traction, few concrete process mining techniques have been presented so far.
Moreover, existing approaches are severely limited in their abilities to keep track of object identity and object dependencies. Consequently, serious problems in event logs with object information remain undetected.
This paper, presents a new formalism 
that combines the key modelling features of two existing approaches, notably the ability of object-centric Petri nets to capture one-to-many relations and the ability of Petri nets with identifiers to compare and synchronize objects based on their identity.
We call the resulting formalism \emph{object-centric Petri nets with identifiers}, and define alignments and the conformance checking task for this setting.
We propose a conformance checking approach for such nets based on an encoding in satisfiability modulo theories~(SMT), and illustrate how it 
serves
to effectively overcome shortcomings of earlier work.
To assess its practicality, we 
evaluate it
on data from the literature.
\keywords{BPM  \and conformance checking \and object-centric processes \and object-centric process mining \and SMT.}
\end{abstract}

\section{Introduction}
\label{sec:intro}

In information systems engineering, business/work processes are classically captured with a case-centric approach: every process instance focuses on the evolution of a main case object (such as an order, a claim, a patient), in isolation  to other case objects. This approach falls short in a variety of real-life, so-called \emph{object-centric processes}, where multiple objects are co-evolved depending on their mutual, many-to-many and one-to-many relations. 
A prime scenario witnessing this intricacy, which we use as main motivation throughout the paper, is the one of order-to-delivery processes, where an order consists of multiple items.

The need of capturing this class of processes has emerged 
in modelling, verification and enactment \cite{PWOB19,AKMA19,Fahland19,SnSW21,GhilardiGMR22}. In particular, a number of different Petri net-based formalisms have been developed to deal with object-centricity, where one can distinguish: 
\begin{inparaenum}[\itshape (i)]
\item implicit object manipulation as in synchronous proclets~\cite{Fahland19} and
object-centric Petri nets~\cite{AalstB20}; and
\item explicit object manipulation as in ~\cite{WerfRPM22,PWOB19,GhilardiGMR22}, 
which extend Petri nets with names \cite{RVFE10} with the ability of manipulating tuples of identifiers, to 
handle object relations.
\end{inparaenum}

Even more vehemently, the same call for object-centricity has  been recently advocated in process mining \cite{Aalst23,BeMA23}. 
Many information systems (e.g., ERP systems \cite{BPRA21,CJKM23}) store event data related to several objects without any explicit reference to a single case object. Artificially introducing or
selecting one such object as the case 
during the extraction phase leads to misleading process mining outputs, in particular when dealing with convergence (events that simultaneously operate over multiple objects) and divergence (concurrent flows of related objects) \cite{Aalst19}. 

This paradigm shift calls for novel, object-centric process mining techniques, able to cope with 
process models that fully support the following features:
\begin{compactenum}
\item activities creating and manipulating objects, as well as one-to-one and one-to-many relations among them---for example, the insertion of multiple items in an order, or the split of an order into multiple packages;\footnote{Many-to-many relations are typically reified into corresponding one-to-many relations, which is essential to properly handle synchronization constraints \cite{Fahland19,AKMA19,GhilardiGMR22}.}
\item concurrent flows evolving related objects separately---for example, the concurrent evolution of package shipments and order notifications;
\item object-aware synchronization points dictating that an object $o$ can flow through an activity only if some (\emph{subset synchronization}) or all (\emph{exact synchronization}) objects related to $o$ simultaneously flow as well --- e.g., enabling order billing only when some/all its packages have been delivered.
\end{compactenum}
Interestingly, while all aformentioned formalisms cope with the first two aspects, object-aware synchronization is completely missing in object-centric Petri nets \cite{AalstB20}, subset synchronization is tackled by Petri nets with identifiers \cite{WerfRPM22}, while synchronous proclets \cite{Fahland19} handle both subset and exact synchronization.

Unfortunately, the existing, few techniques for object-centric process discovery \cite{AalstB20} and conformance checking \cite{LissAA23} (two cornerstone problems in process mining) 
are 
unable to track the identity of objects and their relations, and in turn to deal with synchronization. 
In conformance checking, which is the focus of 
this paper, this means that serious deviations between the recorded and the expected behaviour remain undetected, as shown in the following example: 

\begin{example}
\label{exa:intro}
Consider the order process specified as an \emph{object centric Petri net}~\cite{AalstB20} in \figref{intro}~(a), with sorts \emph{order} (blue) and \emph{product} (yellow).
The following (partial) log uses two orders $o_1$, $o_2$ and two product items $p_1$, $p_2$. Intuitively, it should not conform to the process, since the items are shipped with the wrong order:
{

\centering
\begin{footnotesize}
$
\begin{array}{rl}
&(\m{place\ order}, \{o_1,p_1\}), (\m{payment}, \{o_1\}), (\m{pick\ item}, \{p_1\}),
(\m{place\ order}, \{o_2,p_2\}), \\
&(\m{payment}, \{o_2\}), (\m{pick\ item}, \{p_2\}),  \notag 
(\m{ship}, \{o_1,p_2\}), (\m{ship}, \{o_2,p_1\})
\label{eq:log}
\end{array}
$
\end{footnotesize}
} \par

\noindent
However, according to                                                                                                                                                                                                                                                                                                                                                                                                                                                                                                                                                                                                                                                                                                                                                                                                                                                                                                                                                                                                                                                                                                                                                                                                                                                                                                                                                                                                                                                                                                                                                                                                                                                                                                                                                                                                                                                                                                                                                                                                                                                                                                                                                                                                                                                                                                                                                                                                                                                                                                                                                                                                                                                                                                                                                                                                                                                                                                                                                                                                                                                                                                                                                                                                                                                                                                                                                                                                                                                                                                                                                                                                                                                                                                                                                                                                                                                                                                                                                                                                                                                                                                                                                                                                                                                                                                                                                                                                                                                                                                                                                                                                                                                                                                                                                                                                                                                                                                                                                                                                                                                                                                                                                                                                                                                                                                                                                                                                                                                                                                                                                                                                                                                                                                                                                                                                                                                                                                                                                                                                                                                                                                                                                                                                                                                                                                                                                                                                                                                                                                                                                                                                                                                                                                                                                                                                                                                                                                                                                                                                                                                                                                                                                                                                                                                                                                                                                                                                                                                                                                                                                                                                                                                                                                                                                                                                                                                                                                                                                                                                                                                                                                                                                                                                                                                                                                                                                                                                                                                                                                                                                                                                                                                                                                                                                                                                                                                                                                                                                                                                                                                                                                                                                                                                                                                                                                                                                                                                                                                                                                                                                                                                                                                                                                                                                                                                                                                                                                                                                                                                                                                                                                                                                                                                                                                                                                                                                                                                                                                                                                                                                                                                                                                                                                                                                                                                                                                                                                                                                                                                                                                                                                                                                                                                                                                                                                                                                                                                                                                                                                             {the approach of~\cite{LissAA23}, this log has an alignment with cost 0 (cf. 
\figref{intro}~(b)), i.e., the mismatch is not detected.
Essentially, the problem is that their formalism does not keep track of object identity for synchronization.}
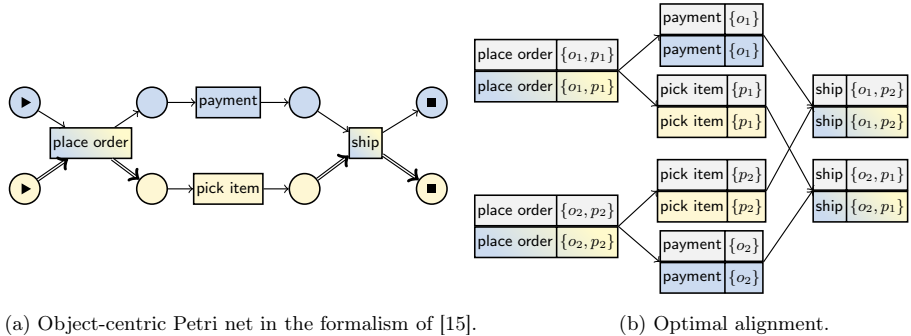
\begin{figure}[t]
\resizebox{\textwidth}{!}{
\begin{tikzpicture}[node distance=25mm]
\node[oplace] (o0) {} node[starttoken] {};
\node[oplace,right of=o0] (o1) {};
\node[otrans,right of=o1, xshift=-10mm] (ot1) {$\m{payment}$};
\node[oplace,right of=ot1, xshift=-10mm] (o2) {};
\node[oplace,right of=o2] (o3) {} node[endtoken] at (o3) {};
\node[iplace, below of=o0, yshift=8mm] (i0) {} node[starttoken] at (i0) {};
\node[iplace,right of=i0] (i1) {};
\node[mixtrans,below of=o1,yshift=17mm,xshift=-12mm] (po) {$\m{place\ order}$};
\node[itrans,right of=i1, xshift=-10mm] (it1) {$\m{pick\ item}$};
\node[iplace,right of=it1, xshift=-10mm] (i2) {};
\node[iplace,right of=i2] (i3) {} node[endtoken] at (i3) {};
\node[mixtrans,below of=o2,yshift=17mm,xshift=12mm] (ship) {$\m{ship}$};
\draw[arc] (o0) -- (po);
\draw[arc] (po) -- (o1);
\draw[arc] (o1) -- (ot1);
\draw[arc] (ot1) -- (o2);
\draw[arc] (o2) -- (ship);
\draw[arc] (ship) -- (o3);
\draw[fatarc] (i0) -- (po);
\draw[fatarc] (po) -- (i1);
\draw[arc] (i1) -- (it1);
\draw[arc] (it1) -- (i2);
\draw[fatarc] (i2) -- (ship);
\draw[fatarc] (ship) -- (i3);
\begin{scope}[xshift=82mm,yshift=5mm,xscale=1.3, yscale=.7]
\tikzstyle{splitme}=[rectangle split, rectangle split horizontal,rectangle split parts=2]
\node[mixtrans, anchor=north,splitme] (pom) {$\m{place\ order}$\nodepart{two}$\{o_1,p_1\}$};
\node[trans, anchor=south,splitme] (pot) {$\m{place\ order}$\nodepart{two}$\{o_1,p_1\}$};
\node[otrans, anchor=north,splitme] (pm) at (2,.8) {$\m{payment}$\nodepart{two}$\{o_1\}$};
\node[trans, anchor=south,splitme] (pt) at (2,.8) {$\m{payment}$\nodepart{two}$\{o_1\}$};
\node[itrans, anchor=north,splitme] (pi1m) at (2,-.8) {$\m{pick\ item}$\nodepart{two}$\{p_1\}$};
\node[trans, anchor=south,splitme] (pi1t) at (2,-.8) {$\m{pick\ item}$\nodepart{two}$\{p_1\}$};
\node[mixtrans, anchor=north,minimum width=13mm,splitme] (sm) at (3.8,-.8) {$\m{ship}$\nodepart{two}$\{o_1, p_2\}$};
\node[trans, anchor=south,minimum width=13mm,splitme] (st) at (3.8,-.8) {$\m{ship}$\nodepart{two}$\{o_1, p_2\}$};
\draw[arc] (pom.north east) -- (pi1m.north west);
\draw[arc] (pom.north east) -- (pm.north west);
\begin{scope}[yshift=-3.5cm]
\node[mixtrans, anchor=north,splitme] (pom2) {$\m{place\ order}$\nodepart{two}$\{o_2, p_2\}$};
\node[trans, anchor=south,splitme] (pot2) {$\m{place\ order}$\nodepart{two}$\{o_2, p_2\}$};
\node[otrans, anchor=north,splitme] (pm2) at (2,-.8) {$\m{payment}$\nodepart{two}$\{o_2\}$};
\node[trans, anchor=south,splitme] (pt2) at (2,-.8) {$\m{payment}$\nodepart{two}$\{o_2\}$};
\node[itrans, anchor=north,splitme] (pi1m2) at (2,.8) {$\m{pick\ item}$\nodepart{two}$\{p_2\}$};
\node[trans, anchor=south,splitme] (pi1t2) at (2,.8) {$\m{pick\ item}$\nodepart{two}$\{p_2\}$};
\node[mixtrans, anchor=north,minimum width=13mm,splitme] (sm2) at (3.8,.8) {$\m{ship}$\nodepart{two}$\{o_2, p_1\}$};
\node[trans, anchor=south,minimum width=13mm,splitme] (st2) at (3.8,.8) {$\m{ship}$\nodepart{two}$\{o_2, p_1\}$};
\draw[arc] (pom2.north east) -- (pi1m2.north west);
\draw[arc] (pom2.north east) -- (pm2.north west);
\end{scope}
\draw[arc] (pi1m.north east) -- (sm2.north west);
\draw[arc] (pm.north east) -- (sm.north west);
\draw[arc] (pi1m2.north east) -- (sm.north west);
\draw[arc] (pm2.north east) -- (sm2.north west);
\end{scope}
\node at (3.4,-3.5) {(a) Object-centric Petri net in the formalism of~\cite{LissAA23}. };
\node at (11,-3.5) {(b) Optimal alignment.};
\end{tikzpicture}}
\caption{Order process and alignment.\label{fig:intro}}
\end{figure}
\end{example}

Our goal 
 is to remedy problems like the one just presented, introducing a comprehensive framework for alignment-based conformance checking of Petri net-based object-centric processes, where object identity and synchronization are fully accommodated.
Precisely, we tackle the following research questions:
\begin{compactenum}
\item[(1)] How to obtain a lightweight formalism for Petri net-based object-centric processes, capable to track object identity and express synchronization? 
\item[(2)] How to formalise conformance checking for such a model?  
\item[(3)] Is it possible to obtain a feasible, algorithmic technique tackling this form of conformance checking?
\end{compactenum}

We answer all these questions affirmatively. Pragmatically, we start from the approach in \cite{LissAA23}, notably from the object-centric Petri nets used therein, which provide two essential features: the presence of multiple object types, and a special type of arc expressing that \emph{many} tokens flow at once through it (how many is decided at binding time). We infuse these nets with explicit object manipulation as originally introduced in \emph{Petri nets with identifiers} \cite{WerfRPM22}. The resulting, novel model of \emph{object-centric Petri nets with identifiers} (\mynets) combines 
simultaneous operations over multiple objects with one-to-many relations, and subset synchronisation based on the identity of objects and their mutual relations. 

We then show how the notion of alignment originally introduced in \cite{LissAA23} can be suitably lifted to the much richer formalism of \mynets. We discuss relevant examples showing how non-conformance like the case of \exaref{intro} can be detected. A side-result of independent interest arises as a result of this discussion: even though, from the modelling point of view, \mynets are only capable of expressing subset synchronisation, if the model does not provide a way to progress objects that missed the synchronization point, then alignments enforce the exact synchronization semantics. 

Finally, we show how the problem of computing optimal object-centric alignments can be cast as a Satisfiability/Optimization Modulo Theory problem, in the style of \cite{FelliGMRW23}, but with a much more complex logical encoding, due to object-centricity. We prove correctness of our encoding, implement it in a new tool \thetool, and experimentally validate the feasibility of our implementation.

The remainder of this paper is structured as follows:
We first recall relevant background about object-centric event logs (\secref{background}),
before we compare object-centric process models from the literature (\secref{related}). We then introduce our notion of object-centric Petri nets with identifiers, and its conformance checking task (\secref{nets}).  
Next, we present our encoding of 
this task as an SMT problem (\secref{encoding}). We then sketch the implementation of our approach in the tool \thetool and present experiments (\secref{evaluation}), before we conclude (\secref{conclusion}).

\section{Object-Centric Event Logs}
\label{sec:background}
We recall basic notions on object-centric event logs \cite{AalstB20}.
Let $\otypes$ be a set of object types, $\activities$ a set of activities, and $\timestamps$ a set of timestamps 
with a total order $<$.
\begin{definition}
An  \emph{event log} is a tuple
$\eventlog = \tup{\events, \objects, \projact, \projobj,\projtime}$
where
\begin{compactitem}
\item $\events$ is a set of event identifiers,
\item $\objects$ is a set of object identifiers that are typed by a function $\otype\colon \objects \to \otypes$,
\item the functions $\projact\colon E \to \activities$, $\projobj\colon E \to \mc P(\objects)$, and $\projtime\colon E \to \timestamps$ associate
each event $e\in\events$ with an activity, a set of affected objects, and a timestamp, respectively, such that
for every $o\in \objects$ the timestamps $\projtime(e)$ of all events $e$ such that $o \in \projobj(e)$ are all different.%
\end{compactitem}
\end{definition}
Given an event log and an object $o \in \objects$, we write $\projtrace(o)$
for the tuple of events involving $o$, ordered by timestamps. Formally, $\projtrace(o) =\tup{e_1, \dots, e_n}$ such that $\{e_1, \dots, e_n\}$ is the set 
of events in $E$ with $o\in \projobj(e)$, and $\projtime(e_1) < \dots <\projtime(e_n)$
(by assumption these timestamps can be totally ordered.)

In examples, we often leave $\objects$ and $\activities$
implicit and present an event log $\eventlog$ as a set of tuples 
$\tup{e,\projact(e), \projobj(e), \projtime(e)}$ 
representing events. Timestamps are shown as natural numbers, and concrete event ids as $\eid{0}, \eid{1}, \dots$.
\begin{example}
\label{exa:log}
Let $\objects=\{o_1, o_2, o_3, p_1, \dots, p_4\}$ with
$\otype(o_i)=\mathit{order}$ and $\otype(p_j)$ $=\mathit{product}$ for all $i$, $j$.
The following is an 
event log with $\events = \{\eid{0}, \eid{1}, \dots, \eid{9}\}$:
\begin{footnotesize}
\begin{align*}
&\tup{\eid{0}, \m{place\ order}, \{o_1,p_1\}, 1}, 
\tup{\eid{1}, \m{payment}, \{o_1\}, 3}, 
\tup{\eid{2}, \m{pick\ item}, \{o_1,p_1\}, 2}, \notag \\
&\tup{\eid{3}, \m{place\ order}, \{o_2,p_2\}, 3}, 
\tup{\eid{4}, \m{payment}, \{o_2\}, 4}, 
\tup{\eid{5}, \m{pick\ item}, \{o_2,p_2\}, 5},  \notag \\
&\tup{\eid{6}{,} \m{ship}{,} \{o_1,p_2\}{,} 6}, 
\tup{\eid{7}{,} \m{ship}{,} \{o_2,p_1\}{,} 9},
\tup{\eid{8}{,} \m{payment}{,} \{o_3\}{,} 2}, 
\tup{\eid{9}{,} \m{ship}{,} \{o_3, p_3, p_4\}{,} 5}
\end{align*}
\end{footnotesize}
For instance, we have $\projtrace(o_1) = \tup{\eid{0}, \eid{1}, \eid{2}, \eid{6}}$.
\end{example}

Given an event log $\eventlog = \tup{\events, \objects, \projact, \projobj,\projtime}$,
the \emph{object graph} $\GG_\eventlog$ of $\eventlog$ is the undirected graph with node set $\objects$, and an edge from $o$ to $o'$ if there is some event $e\in E$ such that $o\in \projobj(e)$ and
$o'\in \projobj(e)$. Thus, the object graph indicates which objects share events.
We next define a \emph{trace graph} as the equivalent of a linear trace in our setting.%
\footnote{Trace graphs are called \emph{process executions} in \cite{AalstB20}, we reserve this term for model runs.}

\begin{definition}
\label{def:trace:graph}
Let $\eventlog = \tup{\events, \objects, \projact, \projobj,\projtime}$ be an event log, and $X$ a connected component in $\GG_\eventlog$.
The \emph{trace graph} induced by $X$ is the directed graph $\tracenet_X = \tup{E_X, D_X}$ where
\begin{compactitem}
\item
the set of nodes $E_X$ is the set of all events $e\in E$ that involve objects in $X$, i.e., such that $X \cap \projobj(e) \neq \emptyset$, and 
\item the set of edges $D_X$ consists of all $\tup{e,e'}$ such that for some 
$o\in \projobj(e) \cap \projobj(e')$, it is $\projtrace(o)=\tup{e_1,\dots,e_n}$
and $e=e_i$, $e'=e_{i+1}$ for some $0\,{\leq}\,i\,{<}\,n$.%
\end{compactitem}
\end{definition}

\begin{example}
\label{exa:trace graph}
The object graph for the log in \exaref{log} is shown in \figref{ograph} on the left. It has two connected components $X_1$ and $X_2$, whose trace graphs $T_{X_1}$ and $T_{X_2}$ are shown on the right.
\begin{figure}[t]
\begin{tikzpicture}
\tikzstyle{event}=[scale=.8, inner sep=2pt]
\node[scale=.8] at (.4, -1) {$X_1$};
\node[scale=.8] at (2.5, -1) {$X_2$}; 
\node[xshift=40mm, scale=.8] at (1.9, -1) {$T_{X_1}$}; 
\node[xshift=100mm, scale=.8] at (.7, -1) {$T_{X_2}$};
\begin{scope}[scale=.7, yshift=2mm]
\tikzstyle{object}=[scale=.8, inner sep=2pt]
\node[object] (o1) at (0,0) {$o_1$};
\node[object] (o2) at (1,0) {$o_2$};
\node[object] (p1) at (0,-1) {$p_1$};
\node[object] (p2) at (1,-1) {$p_2$};                        
\node[object] (o3) at (3,0) {$o_3$};
\node[object] (p3) at (3,-1) {$p_3$};
\node[object] (p4) at (4,-1) {$p_4$};
\draw (o1) -- (p1) -- (o2) -- (p2) -- (o1);
\draw (o3) -- (p3) -- (p4) -- (o3);
\end{scope}
\begin{scope}[xshift=40mm, scale=.6]
\begin{scope}[xscale=1.6, yscale=.8]
\node[event] (0) {$\eid{0}$};
\node[event] (1) at (1, 1){$\eid{1}$};
\node[event] (2) at (1, -1){$\eid{2}$};
\node[event] (3) at (4, 0){$\eid{3}$};
\node[event] (4) at (3, -1){$\eid{4}$};
\node[event] (5) at (3, 1){$\eid{5}$};
\node[event] (6) at (2, 1){$\eid{6}$};
\node[event] (7) at (2, -1){$\eid{7}$};
\draw[->] (2) -- (1);
\draw[->] (0) -- (2);
\draw[->] (3) -- (4);
\draw[->] (3) -- (5);
\draw[->] (1) -- (6);
\draw[->] (5) -- (6);
\draw[->] (2) -- (7);
\draw[->] (4) -- (5);
\draw[->] (5) -- (7);
\end{scope}
\end{scope}
\begin{scope}[xshift=100mm, scale=.6]
\begin{scope}[xscale=1.6, yscale=.9]
\node[event] (8) {$\eid{8}$};
\node[event] (9) at (1,0){$\eid{9}$};
\draw[->] (8) -- (9);
\end{scope}
\end{scope}
\end{tikzpicture}
\caption{Object graph for \exaref{log} and the respective trace graphs.\label{fig:ograph}}
\end{figure}
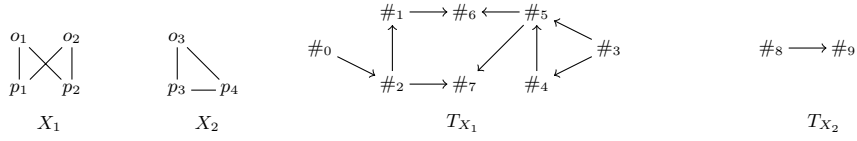
Note that $T_{X_1}$ corresponds to the trace in \exaref{intro}, just that \textsf{pick item} now mentions also the order it is associated with.
\end{example}

\section{Related Work and Modelling Features}
\label{sec:related}

To single out essential modelling features in the object-centric space, we 
reviewed the literature, considering seminal/survey papers \cite{Aalst19,BeMA23,Aalst23} and papers proposing object-centric process
models.
For the latter, we 
restrict to 
those
that rely on a Petri net-based specification\:
\cite{SoSD22,AalstB20,Fahland19,PWOB19,GhilardiGMR22,WerfRPM22}. A summary of the modell\-ing features described next and their support in these 
approaches is given in \tabref{features}. 

A first essential feature is, as expected, the presence of constructs for \emph{creating and deleting objects}. Approaches distinguish each other depending on whether objects are \emph{explicitly referenced} in the model, or only \emph{implicitly manipulated}. Another essential feature is to allow objects to \emph{flow concurrently} and independently from each other --- e.g., items that are picked while their order is paid (cf.~\emph{divergence} in \cite{Aalst19}). Possibly, \emph{multiple objects} of the same type can be \emph{transferred} at once --- e.g., a single check over multiple items. At the same time, it is also crucial to capture single transitions that manipulate multiple objects of the same or different type at once (cf.~\emph{convergence} in \cite{Aalst19}). 
A first form of convergence is when a single transition takes a single, (parent) object and \emph{spawns unboundedly many (child) objects} of a different type, all related to that parent --- e.g., placing an order attaches unboundedly many items to it. If such a parent-children, \emph{one-to-many relation} is tracked, 
also convergence in the form of synchronizing transitions can be supported.
Such transitions allow a parent object to evolve 
only if some or all its child objects are in a certain state; these two forms are resp. called \emph{subset} or \emph{exact synchronization}. Finally, advanced forms of \emph{coreference} can be used to inspect and evolve multiple related objects at once.

Resource-constraint $\nu$-Petri nets \cite{SoSD22} provide the first formalism dealing with a primitive form of object-centricity, where 
objects can be created and removed through the typical, explicit object-reference constructs of $\nu$-Petri nets \cite{RVFE10}. Objects can be related to resources (which are defined in the initial marking, and cannot be generated anew), but not to other objects. Alignment-based conformance checking is defined 
taking
into account object identifiers and the resources they are related to, in an exact \cite{SoSD22} and approximate \cite{SoSD23} way.

Object-centric nets \cite{AalstB20} provide an implicit approach to object-centric processes. Places and transitions are assigned to different object types. Simple arcs match with a single object at once, while double arcs process arbitrarily many objects of a given type. However, as already indicated, object relations are not tracked, which prohibits 
to capture object synchronization and coreference. Alignment-based conformance checking for this class of nets is studied in \cite{LissAA23}.

Synchronous proclets \cite{Fahland19} provide a formalism where objects and their mutual relations are tracked implicitly. Dedicated constructs are provided to deal with the different types of convergence described above, including subset and exact synchronization (but not supporting other forms of coreference). 
Multi-object transfer is only approximately captured via iteration (picking objects one by one). Conformance checking has not yet been studied for proclets.

Variants of Petri nets with identifiers (PNIDs) are studied in \cite{PWOB19,GhilardiGMR22,WerfRPM22}, without dealing with conformance. PNIDs extend $\nu$-Petri nets and the model in \cite{SoSD22} by explicitly tracking objects and their relations, using tuples of identifiers. Differently from object-centric nets and proclets, no constructs are given to operate over unboundedly many objects with a single transition. As shown in \cite{GhilardiGMR22}, multi-object transfer and spawning, as well as subset synchronization, can be simulated using object coreference and iteration, while exact synchronization would require data-aware wholeplace operations, which are not supported.

\newcommand{\header}[1]{
  \multicolumn{1}{P{90}{1.8cm}}{#1}
}

\newcommand{\implicit}{\textsf{imp.}}
\newcommand{\explicit}{\textsf{exp.}}

\begin{table}[t]
\begin{footnotesize}
\begin{tabularx}{\textwidth}{
m{2.1cm}CCCCCCCCCCCCCCCCCCCCCc}
&
&
\header{object creation}
&
&
\header{object removal}
&
&
\header{concurrent object flows}
&
&
\header{multi-object transfer}
&
&
\header{multi-object spawning}
&
&
\header{object relations}
&
&
\header{subset\\ sync}
&
&
\header{exact\\ sync}
&
&
\header{coreference\\~}
&
&
\header{object reference}
&
&
\header{conformance}
\\
\toprule
resource-const. $\nu$-PNS
\cite{SoSD22}
&&
\cmark
&&
\cmark
&&
\cmark
&&
\xmark
&&
\xmark
&&
\xmark
&&
\xmark
&&
\xmark
&&
\xmark
&&
\explicit
&&
\cite{SoSD22,SoSD23}
\\
\midrule
object-centric nets \cite{AalstB20}
&&
\cmark
&&
\cmark
&&
\cmark
&&
\cmark
&&
\cmark
&&
\xmark
&&
\xmark
&&
\xmark
&&
\xmark
&&
\implicit
&&
\cite{LissAA23}
\\
\midrule
synchronous proclets \cite{Fahland19}
&&
\cmark
&&
\cmark
&&
\cmark
&&
\smark
&&
\cmark
&&
\cmark
&&
\cmark
&&
\cmark
&&
\xmark
&&
\implicit
&&
\xmark
\\
\midrule
PNID variants \cite{PWOB19,GhilardiGMR22,WerfRPM22}
&&
\cmark
&&
\cmark
&&
\cmark
&&
\smark
&&
\smark
&&
\cmark
&&
\smark
&&
\xmark
&&
\cmark
&&
\explicit
&&
\xmark
\\
\midrule
\textbf{\mynets}
&&
\cmark
&&
\cmark
&&
\cmark
&&
\cmark
&&
\cmark
&&
\cmark
&&
\cmark
&&
\smark
&&
\cmark
&&
\explicit
&&
\emph{here}
\\
\bottomrule
\end{tabularx}
\end{footnotesize}
\caption{Comparison of Petri net-based object-centric process modelling languages along main modelling features, tracking which approaches support conformance. {\cmark} indicates full, direct support, {\xmark} no support, and {\smark} indirect support.}
\label{tab:features}
\end{table}

\section{Object-Centric Petri Nets with Identifiers}
\label{sec:nets}

Using the literature analysis in \secref{related}, 
we define \emph{object-centric Petri nets with identifiers} (\mynets), combining the features of PNIDs with those of object-centric nets. As in PNIDs, objects can be created in \mynets using $\nu$ variables, and tokens can carry objects or tuples of objects (accounting for object relations). In this way, objects can be associated with other objects, e.g., in situations as in \exaref{order}, a product can ``remember'' the order it belongs to. Arcs are labeled with (tuples of) variables to match with objects and relations. 
Differently from PNIDs, \mynets also support multi-object spawning and transfer typical of object-centric nets,
by including special variables that match with \emph{sets of objects}. All in all, as shown in the last row of \tabref{features}, our new 
formalism supports all features discussed in \secref{related}, with the exception of exact synchronization, for the same reason for which this is not supported in PNIDs. However, we put $\smark$ there, since as we will see below, this feature is implicitly supported when computing alignments. 


\smallskip
\noindent
\textbf{Formal Definition.}
First, we assume that every object type $\sigma \in \otypes$ has a domain $\Dom(\sigma) \subseteq \objects$,
given by all objects in $\objects$ of type $\sigma$. 
In addition to the types in $\Sigma$, we also consider list types with a base type in $\sigma$, denoted as $\listtype{\sigma}$.
As in colored Petri nets, each place has a \emph{color}: a cartesian product of data types from $\otypes$. 
More precisely, the set of colors $\colors$ is the set of all $\sigma_1 \times \cdots \times \sigma_m$ such that $m \geq 1$ and $\sigma_i \in \otypes$ for all $1\leq i\leq m$.
We fix a set of $\otypes$-typed variables $\allvars = \varset \uplus \varset_{list} \uplus \nuvarset$ as the disjoint union of 
a set $\varset$ of ``normal'' variables that refer to single objects, denoted by lower-case letters like $v$, with a type $\vartype(v) \in \Sigma$,
a set $\varset_{list}$ of list variables that refer to a list of objects of the same type, denoted by upper case letters like $U$, with a type $\vartype(U)=\listtype{\sigma}$ for some $\sigma \in \Sigma$,
and 
 a set $\nuvarset$ of variables referring to fresh objects, denoted as $\nu$, 
 with $\vartype(\nu) \in \Sigma$. 
We assume that infinitely many variables of each kind exist, and for every $\nu \,{\in}\,\nuvarset$, that $\Dom({\vartype(\nu)})$ is 
infinite, to ensure an unbounded supply of fresh objects~\cite{RVFE10}.

In \mynets, tokens are tuples of objects, each associated with a color.
E.g., for the objects in \exaref{log}, we want to use $\tup{o_1, p_1}$, $\tup{o_2, p_2}$, or also just $\tup{o_1}$ as tokens.
To define relationships between objects in consumed and produced tokens when firing a transition,  we next define \emph{inscriptions} of arcs.

\begin{definition}
\label{def:inscription}
An \emph{inscription} is a tuple $\vec v = \tup{v_1, \dots, v_m}$
such that $m \geq 1$ and $v_i\in \allvars$ for all $i$, but at most one $v_i \in \listvarset$, for $1 \leq i \leq m$.
We call $\vec v$ a 
\emph{template inscription} if $v_i\in \listvarset$ for some $i$, and a
\emph{simple inscription} otherwise.
\end{definition}

For instance, for $o,p\in\varset$ and $P\in \listvarset$, there are inscriptions $\tup{o,P}$ or $\tup{p}$, the former being a template inscription and the latter a simple one. However, $\tup{P,P}$ is not a valid inscription as it has two list variables.
By allowing at most one list variable in inscriptions, we restrict to many-to-one relationships between objects, but it is known that many-to-many relationships can be modeled by many-to-one with auxiliary objects, through reification.
Template inscriptions will 
be used to capture an arbitrary number of tokens of the same color: intuitively,
if $o$ is of type \emph{order} and $P$ of type $\listtype{product}$, then
$\tup{o,P}$ refers to a single order with an arbitrary number of products.

We define the color of an inscription $\iota =\tup{v_1, \dots, v_m}$ as the tuple of the types of the involved variables, i.e., $\coloring(\iota) =\tup{\sigma_1, \dots, \sigma_m}$ where $\sigma_i=\vartype(v_i)$ if $v_i\in \varset\cup \nuvarset$, and $\sigma_i=\sigma'$ if $v_i$ is a list variable of type $\listtype{\sigma'}$.
Moreover, we set  $\vars{\iota} = \{v_1, \dots, v_m\}$.
E.g. for $\iota=\tup{o,P}$ with $o$, $P$ as above, we have $\coloring(\iota) =$ $\tup{\mathit{order}, \mathit{product}}$ and $\vars{\iota}=\{o,P\}$.
The set of all inscriptions is denoted $\Omega$.

\begin{definition}
\label{def:OCInet}
An \emph{object-centric Petri net with identifiers} (\mynet) 
is defined as a tuple  
$N = (\otypes, \places, \transitions, \inflow, \outflow, 
\coloring,\ell)$,
where:
\begin{compactenum}
\item $\places$ and $\transitions$ are finite sets of places and transitions such that $P\cap T=\emptyset$;
\item $\coloring\colon \places \rightarrow \colors$ maps every place to a color;
\item $\ell \colon T \to \activities \cup \{\tau\}$ is the transition labelling where $\tau$ marks an invisible activity,
\item $\inflow \colon \places \times \transitions \rightarrow \Omega$ is a partial function called \emph{input flow},  such that $\coloring(\inflow(p,t))=\coloring(p)$
for every $(p,t)\in \dom(\inflow)$;
\item $\outflow\colon \transitions \times \places \rightarrow \Omega$ is a partial function called \emph{output flow}, such that $\coloring(\outflow(t,p))=\coloring(p)$ for every $(t,p)\in \dom(\outflow)$;
\end{compactenum}
We set $\invars{t} = \bigcup_{p \in P} \vars{\inflow(p,t)}$, and
$\outvars{t} = \bigcup_{p \in P} \vars{\outflow(t,p)}$, and require that
$\invars{t}\,{\cap}\,\nuvarset\,{=}\,\emptyset$ and 
$\outvars{t} \subseteq \invars{t}\,{\cup}\,\nuvarset$, for all $t\,{\in}\,T$.
\end{definition}

If $\inflow(p, t)$ is defined, it
is called a \emph{variable} flow if $\inflow(p,t)$ is a template inscription,
and \emph{non-variable} flow otherwise; and similar for output flows.
Variable flows play the role of variable arcs in \cite{LissAA23}, they can carry multiple tokens at once.
For an \mynet $\NN$ as in \defref{OCInet}, we also use the common notations $\pre t = \{p \mid (p,t)\in \dom(\inflow)\}$ and $\post t = \{p \mid (t,p)\in \dom(\outflow)\}$.

\begin{example}
\label{exa:order}
\figref{OPID:package:handling} expresses the model of \exaref{intro} as an \mynet.
We use variables $\nu_o$ or type \emph{order} and $\nu_p$ of type \emph{product}, both in $\nuvarset$, to refer to new objects and products; as well as normal variables $o,p \in \varset$ of type \emph{order} and \emph{product}, respectively, and a variable $P$ of type $\listtype{\mathit{product}}$.
For readability, we write e.g. $o$ instead of $\tup{o}$.
Note that in contrast to the model in \figref{intro}, after \textsf{place\:order}, products now remember the order they belong to.
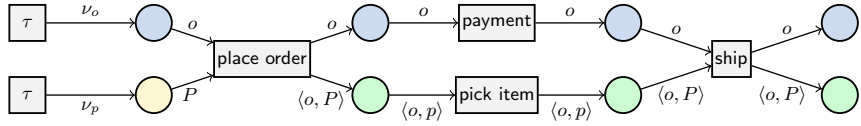
\begin{figure}[t]
\begin{tikzpicture}[node distance=26mm]
\node[oplace] (o0) {};
\node[trans, left of=o0, xshift=5mm] (geno) {$\tau$};
\node[oplace,right of=o0, xshift=10mm] (o1) {};
\node[trans,right of=o1, xshift=-5mm] (ot1) {$\m{payment}$};
\node[oplace,right of=ot1, xshift=-5mm] (o2) {};
\node[oplace,right of=o2, xshift=10mm] (o3) {};
\node[iplace, below of=o0, yshift=14mm] (i0) {};
\node[trans, left of=i0, xshift=5mm] (geni) {$\tau$};
\node[oiplace,right of=i0, xshift=10mm] (i1) {};
\node[trans,below of=o1,yshift=20mm,xshift=-18mm] (po) {$\m{place\ order}$};
\node[trans,right of=i1, xshift=-5mm] (it1) {$\m{pick\ item}$};
\node[oiplace,right of=it1, xshift=-5mm] (i2) {};
\node[oiplace,right of=i2, xshift=10mm] (i3) {} ;
\node[trans,below of=o3,yshift=20mm,xshift=-18mm] (ship) {$\m{ship}$};
\draw[arc] (geno) -- node[above, insc] {$\nu_o$} (o0);
\draw[arc] (geni) -- node[below, insc] {$\nu_p$} (i0);
\draw[arc] (o0) -- node[above, insc] {$o$} (po);
\draw[arc] (po) -- node[above, insc] {$o$} (o1);
\draw[arc] (o1) -- node[above, insc] {$o$} (ot1);
\draw[arc] (ot1) -- node[above, insc] {$o$} (o2);
\draw[arc] (o2) -- node[above, insc] {$o$} (ship);
\draw[arc] (ship) -- node[above, insc] {$o$} (o3);
\draw[arc] (i0) -- node[below, insc, anchor=east, yshift=-2mm, xshift=2mm] {$P$} (po);
\draw[arc] (po) -- node[below, insc, anchor=west, yshift=-3mm, xshift=-7mm] {$\tup{o,P}$} (i1);
\draw[arc] (i1) -- node[below, insc] {$\tup{o,p}$} (it1);
\draw[arc] (it1) -- node[below, insc] {$\tup{o,p}$} (i2);
\draw[arc] (i2) -- node[insc, below, anchor=east, yshift=-3mm, xshift=6mm] {$\tup{o,P}$} (ship);
\draw[arc] (ship) -- node[insc, below, anchor=west,yshift=-3mm,xshift=-6mm] {$\tup{o,P}$} (i3);
\end{tikzpicture}
\caption{\mynet for a package handling process.\label{fig:OPID:package:handling}}
\end{figure}
As in \exaref{intro}, the transitions \textsf{place\:order} and \textsf{ship} have variable arcs, to process multiple products at once.
However, we can now use tokens that combine a product $p$ with its order $o$ in a tuple $\tup{o,p}$. In \exaref{alignment} below we will see that the semantics of \mynets ensure that in the \textsf{ship} transition, all consumed tokens must refer to \emph{the same} order.
\end{example}

\smallskip
\noindent
\textbf{Semantics.}
Given the set of objects $\objects$,
the set of \emph{tokens} $\objtuples$ is the set of object tuples $\objtuples\,{=}\,\{\objects^m\,{\mid}\,m\,{\geq}\,1\}$.
The \emph{color} of a token $\omega\in \objtuples$ of the form $\omega=\tup{o_1, \dots, o_m}$ is denoted
$\coloring(\omega)=\tup{\vartype(o_1), \dots, \vartype(o_m)}$.
To define the execution semantics, we first introduce a notion of
a \emph{marking} of an \mynet $\NN=\tup{\otypes, \places, \transitions, \inflow, \outflow, \coloring,\ell}$, namely as a function $\marking\colon\places\rightarrow 2^{\objtuples}$, such that for all $p\in P$ and $\tup{o_1,\dots o_m} \in \marking(p)$, it holds that $\coloring(\tup{o_1,\dots o_m}) = \coloring(p)$. 
Let $\mathit{Lists}(\objects)$ denote the set of objects lists of the form $[o_1, \dots, o_k]$ with $o_1,\dots, o_k \in \objects$ such that all $o_i$ have the same type;
the type of such a list is then $\listtype{\vartype(o_1)}$.
Next, we define \emph{bindings} to fix which objects are involved in a transition firing.

\begin{definition}
A \emph{binding} for a transition $t$ and a marking $\marking$ is a type-preserving function 
$b\colon \invars{t} \cup \outvars{t} \to \objects \cup \mathit{Lists}(\objects)$.
To ensure freshness of created values, we demand that $b$ is injective on $\nuvarset \cap \outvars{t}$, and that $b(\nu)$ does not occur in $\marking$ for all $\nu \in \nuvarset \cap \outvars{t}$.
\end{definition}

For transition \textsf{ship} in \exaref{order} the mapping $b$ that sets
$b(o) = o_1$ and $b(P)=[p_1,p_2,p_3]$ is a binding (for any marking).
Next, we extend bindings to inscriptions to fix which tokens (not just single objects) participate in a transition firing.
The extension of a binding $b$ to inscriptions, i.e., variable tuples, is denoted $\vec b$.
For an inscription $\iota\,{=}\,\tup{v_1, \dots, v_m}$, let $o_i\,{=}\,b(v_i)$ for all $1\,{\leq}\,i\,{\leq}\,m$.
Then $\vec b(\iota)$ is the set of object tuples defined as follows:
if $\iota$ is a simple inscription then
$\vec b(\iota) = \{\tup{o_1, \dots, o_m}\}$.
Otherwise, there must be one $v_i$, $1\,{\leq}\,i\,{\leq}\,n$, such that $v_i \in \listvarset$, and consequently $o_i$ must be a list, say $o_i=[u_1, \dots, u_k]$ for some $u_1, \dots, u_k$.
Then $\vec b(\iota) = \{\tup{o_1, \dots, o_{i-1}, u_1, o_{i+1},\dots, o_m}, \dots, \tup{o_1, \dots, o_{i-1}, u_k, o_{i+1},\dots, o_m}\}$.
The set of all bindings is denoted by $\mathcal B$.
Next, we define that a transition with a binding $b$ is enabled if all object tuples pointed by $\vec b$ occur in the current marking. 

\begin{definition}
\label{def:enabled}
A transition $t \in\transitions$ and a binding $b$ for marking $\marking$ are \emph{enabled} in $\marking$
if $\vec b(\inflow(p,t)) \subseteq \marking(p)$ for all $p \in \pre{t}$.
\end{definition}
 
E. g., the binding $b$ with $b(o) = o_1$ and $b(P)=[p_1,p_2,p_3]$ is enabled in a marking $\marking$ of the \mynet in \exaref{order} with $\tup{o_1}\in \marking(q_{\mathit{blue}})$ and $\tup{o_1,p_1}, \tup{o_1,p_2}, \tup{o_1,p_3}\in \marking(q_{\mathit{green}})$, for $q_{\mathit{blue}}$ and $q_{\mathit{green}}$ the input places of \textsf{ship} with respective color.
 
\begin{definition}
Let transition $t$ be enabled in marking $\marking$ with binding $b$.
The \emph{firing} of $t$ yields the new marking $\marking'$ given by $\marking'(p)=\marking(p) \setminus \vec b(\inflow(p,t)) $ for all $p \in \pre t$, and
$\marking'(p)=\marking(p) \cup \vec b(\outflow(p,t))$ for all $p \in \post t$.
\end{definition}

We write $\marking \goto{t,b} \marking'$ to denote that $t$ is enabled with binding $b$ in $\marking$, and its firing yields $\marking'$.
A sequence of transitions with bindings 
$\run = \tup{(t_1, b_1), \dots, (t_n, b_n)}$ is called a \emph{run} 
if $\marking_{i-1} \goto{t_i, b_i} \marking_i$ for all $1\leq i \leq n$,
in which case we write $\marking_0 \goto{\run} \marking_n$.
For such a binding sequence $\run$, the \emph{visible subsequence} $\run_v$ is the subsequence of $\run$ consisting of all $(t_i, b_i)$ such that $\ell(t_i) \neq \tau$.

An \emph{accepting} object-centric Petri net with identifiers is an object-centric Petri net $\NN$
together with a set of initial markings $M_{\mathit{init}}$ and a set of final markings  $M_{\mathit{final}}$.
For instance, for \exaref{order}, $M_{\mathit{init}}$ consists only of the empty marking, whereas $M_{\mathit{final}}$ consists of all (infinitely many) markings in which each of the two right-most places has at least one token, and all other places have no token.
The \emph{language} of the net is given by
$\LL(\NN) = \{\run_v \mid m \goto{\run} m',\ m \in M_{\mathit{init}}\text{, and } m' \in M_{\mathit{final}}\}$,
i.e., the set of visible subsequences of accepted sequences.

\smallskip
\noindent
\textbf{Alignments.} In our approach, alignments show how a trace graph relates to a run of the model.
In the remainder of this section, we consider a trace graph $\tracenet_X = \tup{E_X, D_X}$ and an accepting \mynet with identifiers $\NN=\tup{\otypes, \places, \transitions, \inflow, \outflow, \coloring,\ell}$, and we assume that the language of $\NN$ is not empty.

\begin{definition}
A \emph{move} is a tuple that is 
a \emph{model} move if it is in the set 
$\{\SKIP\} \times ((\activities \cup \{\tau\}) \times\mathcal P(\objects))$,
and a \emph{log move} if it is in the set
$(\activities \times\mathcal P(\objects)) \times \{\SKIP\}$, and 
a \emph{synchronous} move if it is of the form
$\tup{\tup{a,O},\tup{a',O'}} \in (\activities \times\mathcal P(\objects)) \times (\activities \times\mathcal P(\objects))$ such that $a=a'$ and $O=O'$.
The set of all synchronous, model, and log moves over $X$ and $\NN$ is denoted $\moves(\tracenet_X, \NN)$.
\end{definition}

\noindent In the object-centric setting, an alignment is a \emph{graph} of moves.
To define them formally, we first define log and model projections, similarly to~\cite{LissAA23}.
In a graph of moves we write $\tup{q_0,r_0} \skippath{\log} \tup{q_k,r_k}$  if there
is a path $\tup{q_0,r_0} \to \tup{\SKIP,r_1} \to \dots \tup{\SKIP,r_{k-1}} \to \tup{q_k,r_k}$, for $q_0, q_k \neq \SKIP$  and $k>0$, i.e., a path where all intermediate log components are $\SKIP$.
Similarly, $\tup{q_0,r_0} \skippath{\mod} \tup{q_k,r_k}$ abbreviates
 a path $\tup{q_0,r_0} \to \tup{q_1,\SKIP} \to \dots \tup{q_{k-1},\SKIP} \to \tup{q_k,r_k}$, for $r_0, r_k \neq \SKIP$ and $k>0$.

\begin{definition}[Projections]
Let $G{=}\tup{C,B}$ be a
graph with $C{\subseteq}\moves(\tracenet_X, \NN)$.

The \emph{log projection} $G\restrlog =\tup{C_l,B_l}$ is the graph with node set $C_l = \{ q \mid \tup{q,r} \in C\text{ and }q \neq \SKIP\}$, and an edge $\tup{q, q'}$ iff $\tup{q,r} \skippath{\log} \tup{q',r'}$ for some $r,r'$.

The \emph{model projection} $G\restrmod = \tup{C_m,B_m}$ is the graph with node set $C_m = \{ r \mid \tup{q,r} \in C\text{, }r \neq \SKIP\}$, and an edge $\tup{r, r'}$ iff $\tup{q,r} \skippath{\mod} \tup{q',r'}$ for some $q,q'$.
\end{definition}

Basically, for a graph $G$ over moves, the log projection is a graph that restricts to the log component of moves, omitting skip symbols.
The edges are as in $G$, except that one also adds edges that ``shortcut'' over model moves, i.e. where the log component is $\SKIP$; the model projection is analogous for the other component.
Next we define an alignment as a graph over moves where the log and model projections are a trace graph and a run, respectively.

\begin{definition}[Alignment]
An \emph{alignment} of a trace graph $\tracenet_X$ and an accepting \mynet $\NN$ is an acyclic directed graph $\Gamma=\tup{C,B}$ with $C \subseteq \moves(\tracenet_X, \NN)$
such that $\Gamma\restrlog=\tracenet_X$, 
there is a run $\run = \tup{\tup{t_1, b_1}, \dots, \tup{t_n,b_n}}$ with $\run_v \in \LL(\NN)$, and the model projection $\Gamma\restrmod = \tup{C_m,B_m}$ admits a bijection $f\colon \{\tup{t_1, b_1}, \dots, \tup{t_n,b_n}\} \to C_m$ such that
\begin{compactitem}
\item 
if $f(t_i, b_i) = \tup{a, O_m}$ then $\ell(t_i)=a$, and $O_m= \mathit{range}(b_i)$, for all $1\leq i \leq n$
\item
for all $\tup{r,r'}\in B_m$ there are $1{\leq}i{<}j{\leq}n$ such that $f(t_i,b_i){=}r$ and $f(t_j,b_j){=}r'$,
\SW{and conversely, if $r,r'\in C_m$ with $r=f(t_i,b_i)$ and $r'=f(t_{i+1},b_{i+1})$ for some $1\leq i < n$ then $\tup{r,r'} \in B_m$}.
\end{compactitem}
\end{definition}

\begin{example}
\label{exa:alignment}
\figref{alignment} shows an alignment $\Gamma$ for $T_{X_1}$ from \exaref{trace graph} w.r.t.~the model in \exaref{order}, where the log (resp. model) component is shown on top (resp. bottom) of moves.
Below, the log (left) and model projections (right) of $\Gamma$ are shown.
One can check that the former is isomorphic to $T_{X_1}$ in \exaref{trace graph}.
The alignment in \exaref{intro} is not valid in our setting: its model projection is not in the language of the net.
\begin{figure}[t]
\centering
\resizebox{.7\textwidth}{!}{
\begin{tikzpicture}[xscale=1.4, yscale=.7]
\begin{scope}
\node[trans, model, anchor=north,minimum width=11mm,splitme] (co1) at (-1.5,.8) {$\tau$\nodepart{two}$\{o_1\}$};
\node[trans, log, anchor=south,minimum width=11mm] (co1s) at (-1.5,.8) {$\SKIP$};
\node[trans, model, anchor=north,minimum width=11mm,splitme] (ci1) at (-1.5,-.8) {$\tau$\nodepart{two}$\{p_1\}$};
\node[trans, log, anchor=south,minimum width=11mm] (ci1s) at (-1.5,-.8) {$\SKIP$};
\node[trans, model, anchor=north,splitme] (pom) {$\m{place\ order}$\nodepart{two}$\{o_1,p_1\}$};
\node[trans, log, anchor=south,splitme] (pot) {$\m{place\ order}$\nodepart{two}$\{o_1,p_1\}$};
\node[trans, model, anchor=north,splitme] (pm) at (3,.8) {$\m{payment}$\nodepart{two}$\{o_1\}$};
\node[trans, log, anchor=south,splitme] (pt) at (3,.8) {$\m{payment}$\nodepart{two}$\{o_1\}$};
\node[trans, model, anchor=north,splitme] (pi1m) at (2,-.8) {$\m{pick\ item}$\nodepart{two}\parbox{12mm}{$\{o_1,p_1\}$}};
\node[trans, log, anchor=south,splitme] (pi1t) at (2,-.8) {$\m{pick\ item}$\nodepart{two}\parbox{12mm}{$\{o_1,p_1\}$}};
\node[trans, model, anchor=north,minimum width=20mm] (sm) at (5,.8) {$\SKIP$};
\node[trans, log, anchor=south,minimum width=20mm,splitme] (st) at (5,.8) {$\m{ship}$\nodepart{two}$\{o_1, p_2\}$};
\node[trans, model, anchor=north,minimum width=20mm,splitme] (sma) at (5,-.8) {$\m{ship}$\nodepart{two}$\{o_1, p_1\}$};
\node[trans, log, anchor=south,minimum width=20mm] (sta) at (5,-.8) {$\SKIP$};
\draw[arc] (pom.north east) -- (pi1m.north west);
\draw[arc] (pi1t.north) -- (pm.north west);
\draw[arc] (co1.north east) -- (pom.north west);
\draw[arc] (ci1.north east) -- (pom.north west);
\begin{scope}[yshift=-3.5cm]
\node[trans, model, anchor=north,minimum width=11mm,splitme] (co2) at (-1.5,.8) {$\tau$\nodepart{two}$\{o_2\}$};
\node[trans, log, anchor=south,minimum width=11mm] (co2s) at (-1.5,.8) {$\SKIP$};
\node[trans, model, anchor=north,minimum width=11mm,splitme] (ci2) at (-1.5,-.8) {$\tau$\nodepart{two}$\{p_2\}$};
\node[trans, log, anchor=south,minimum width=11mm] (ci2s) at (-1.5,-.8) {$\SKIP$};
\node[trans, model, anchor=north,splitme] (pom2) {$\m{place\ order}$\nodepart{two}$\{o_2, p_2\}$};
\node[trans, log, anchor=south,splitme] (pot2) {$\m{place\ order}$\nodepart{two}$\{o_2, p_2\}$};
\node[trans, model, anchor=north,splitme] (pm2) at (2,-.8) {$\m{payment}$\nodepart{two}$\{o_2\}$};
\node[trans, log, anchor=south,splitme] (pt2) at (2,-.8) {$\m{payment}$\nodepart{two}$\{o_2\}$};
\node[trans, model, anchor=north,splitme] (pi1m2) at (3,.8) {$\m{pick\ item}$\nodepart{two}\parbox{12mm}{$\{o_2,p_2\}$}};
\node[trans, log, anchor=south,splitme] (pi1t2) at (3,.8) {$\m{pick\ item}$\nodepart{two}\parbox{12mm}{$\{o_2,p_2\}$}};
\node[trans, model, anchor=north,minimum width=20mm] (sm2) at (5,.8) {$\SKIP$};
\node[trans, log, anchor=south,minimum width=20mm,splitme] (st2) at (5,.8) {$\m{ship}$\nodepart{two}$\{o_2, p_1\}$};
\node[trans, model, anchor=north,minimum width=20mm,splitme] (smb) at (5,-.8) {$\m{ship}$\nodepart{two}$\{o_2, p_2\}$};
\node[trans, log, anchor=south,minimum width=20mm] (stb) at (5,-.8) {$\SKIP$};
\draw[arc] (pom2.north east) -- (pi1m2.north west);
\draw[arc] (pom2.north east) -- (pm2.north west);
\draw[arc] (co2.north east) -- (pom2.north west);
\draw[arc] (ci2.north east) -- (pom2.north west);
\end{scope}
\draw[arc] (pi1m.north east) -- (sm2.north west);
\draw[arc] (pm.north east) -- (sm.north west);
\draw[arc] (pi1m2.north east) -- (sm.north west);
\draw[arc] (pi1m.north east) -- (sma.north west);
\draw[arc] (pm.north east) -- (sma.north west);
\draw[arc] (pi1m2.north east) -- (smb.north west);
\draw[arc] (pi1m2.north east) -- (st2.south west);
\draw[arc] (pt2) -- (pi1m2);
\end{scope}
\end{tikzpicture}}
\\[1ex]
\resizebox{\textwidth}{!}{
\begin{tikzpicture}[xscale=1.4, yscale=.4] 
\node[trans, log, anchor=south,splitme] (pot) {$\m{place\ order}$\nodepart{two}$\{o_1,p_1\}$};
\node[trans, log, anchor=south,splitme] (pt) at (2,.8) {$\m{payment}$\nodepart{two}$\{o_1\}$};
\node[trans, log, anchor=south,splitme] (pi1t) at (2,-.8) {$\m{pick\ item}$\nodepart{two}$\{o_1,p_1\}$};
\node[trans, log, anchor=south,minimum width=20mm,splitme] (st) at (3.8,0) {$\m{ship}$\nodepart{two}$\{o_1, p_2\}$};
\begin{scope}[yshift=-3.5cm]
\node[trans, log, anchor=south,splitme] (pot2) {$\m{place\ order}$\nodepart{two}$\{o_2, p_2\}$};
\node[trans, log, anchor=south,splitme] (pt2) at (2,-.8) {$\m{payment}$\nodepart{two}$\{o_2\}$};
\node[trans, log, anchor=south,splitme] (pi1t2) at (2,.8) {$\m{pick\ item}$\nodepart{two}$\{o_1,p_2\}$};
\node[trans, log, anchor=south,minimum width=20mm,splitme] (st2) at (3.8,0) {$\m{ship}$\nodepart{two}$\{o_2, p_1\}$};
\end{scope}
\draw[arc] (pot) -- (pi1t);
\draw[arc] (pi1t) -- (pt);
\draw[arc] (pt.east) -- (st.175);
\draw[arc] (pi1t2.east) -- (st.185);
\draw[arc] (pot2) -- (pt2);
\draw[arc] (pot2) -- (pi1t2);
\draw[arc] (pi1t.east) -- (st2.175);
\draw[arc] (pi1t2.east) -- (st2.185);
\draw[arc] (pt2) -- (pi1t2);
\begin{scope}[xshift=6.8cm,yshift=12mm] 
\node[trans, model, anchor=north,minimum width=11mm,splitme] (co1) at (-1.5,.8) {$\tau$\nodepart{two}$\{o_1\}$};
\node[trans, model, anchor=north,minimum width=11mm,splitme] (ci1) at (-1.5,-.8) {$\tau$\nodepart{two}$\{p_1\}$};
\node[trans, model, anchor=north,splitme] (pom) {$\m{place\ order}$\nodepart{two}$\{o_1,p_1\}$};
\node[trans, model, anchor=north,splitme] (pm) at (2,.8) {$\m{payment}$\nodepart{two}$\{o_1\}$};
\node[trans, model, anchor=north,splitme] (pi1m) at (2,-.8) {$\m{pick\ item}$\nodepart{two}$\{o_1,p_1\}$};
\node[trans, model, anchor=north,minimum width=20mm,splitme] (sma) at (3.8,0) {$\m{ship}$\nodepart{two}$\{o_1, p_1\}$};
\begin{scope}[yshift=-3.5cm]
\node[trans, model, anchor=north,minimum width=11mm,splitme] (co2) at (-1.5,.8) {$\tau$\nodepart{two}$\{o_2\}$};
\node[trans, model, anchor=north,minimum width=11mm,splitme] (ci2) at (-1.5,-.8) {$\tau$\nodepart{two}$\{p_2\}$};
\node[trans, model, anchor=north,splitme] (pom2) {$\m{place\ order}$\nodepart{two}$\{o_2, p_2\}$};
\node[trans, model, anchor=north,splitme] (pm2) at (2,-.8) {$\m{payment}$\nodepart{two}$\{o_2\}$};
\node[trans, model, anchor=north,splitme] (pi1m2) at (2,.8) {$\m{pick\ item}$\nodepart{two}$\{o_2,p_2\}$};
\node[trans, model, anchor=north,minimum width=20mm,splitme] (smb) at (3.8,0) {$\m{ship}$\nodepart{two}$\{o_2, p_2\}$};
\end{scope}
\draw[arc] (pi1m) -- (pm);
\draw[arc] (pom) -- (pi1m);
\draw[arc] (pm.east) -- (sma);
\draw[arc] (pi1m.east) -- (sma);
\draw[arc] (pom2) -- (pm2);
\draw[arc] (pom2) -- (pi1m2);
\draw[arc] (pi1m2.east) -- (smb);
\draw[arc] (pm2) -- (pi1m2);
\draw[arc] (co1.east) -- (pom.175);
\draw[arc] (ci1.east) -- (pom.185);
\draw[arc] (co2.east) -- (pom2.175);
\draw[arc] (ci2.east) -- (pom2.185);
\end{scope}
\end{tikzpicture}}
\caption{Alignment with projections.\label{fig:alignment}}
\end{figure}
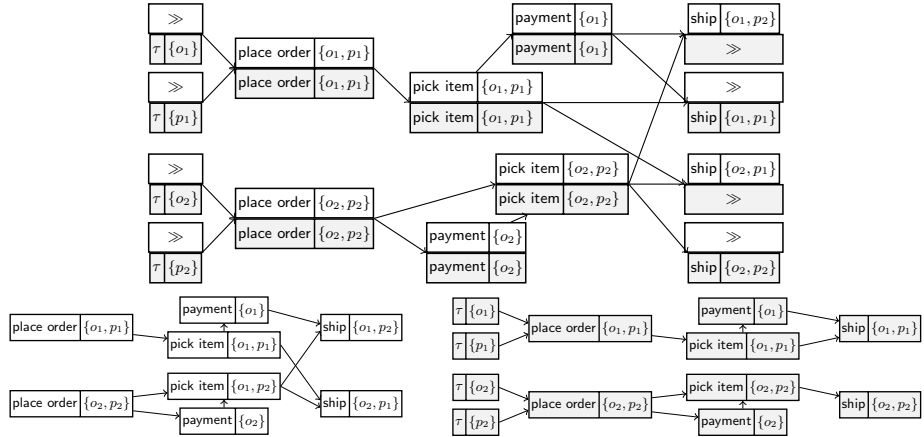
\end{example}

In the remainder of the paper, we use the following cost function based on \cite{LissAA23}.
However, the approach developed below can also be adapted to other definitions.

\begin{definition}
\label{def:cost}
The cost of a move $M$ is defined as follows:
\begin{compactitem}
\item
if $M$ is a log move $\tup{\tup{a_\log, O_\log}, \SKIP}$ then $\cost(M) = |O_\log|$,
\item
if $M$ is a model move $\tup{\SKIP, \tup{a_\mod, O_\mod}}$ then $\cost(M) = 0$ if $a_\mod=\tau$, and $\cost(M) = |O_\mod|$ otherwise, 
\item
if $M$ is a synchronous move $\tup{\tup{a_\log, O_\log}, \tup{a_\mod, O_\mod}}$ then $\cost(M) = 0$.
\end{compactitem}
For an alignment $\Gamma=\tup{C,B}$, we set $\cost(\Gamma) =\sum_{M\in C}\cost(M)$, i.e., the cost of an alignment $\Gamma$ is  the sum of the cost of its moves.
\end{definition}

E.g., $\Gamma$ in \figref{alignment} has cost 8, as it involves two log moves and two non-silent model moves with two objects each. In fact, it is optimal in the following sense:

\begin{definition}
An alignment $\Gamma$ of a trace graph $\tracenet_X$ 
and an accept\-ing \mynet $\NN$ is \emph{optimal} 
if $\cost(\Gamma)\,{\leq}\,\cost(\Gamma')$ for all alignments $\Gamma'$ of $\tracenet_X$ and $\NN$.
\end{definition}

The \emph{conformance checking task} for an accepting \mynet $\NN$ and a log $L$ is
to find optimal alignments with respect to $\NN$ for all trace graphs in $L$.

\noindent
Finally, \mynets generalize the object-centric nets in~\cite{LissAA23,AalstB20}:

\begin{remark}
\label{rem:OPIfromtheirnet}
Every object-centric net $\mathit{ON}$ in the formalism of~\cite{LissAA23,AalstB20} can be encoded into an equivalent \mynet $\NN$ by adding suitable arc inscriptions, as follows. Let $v_\sigma\,{\in}\, \varset$ be a normal variable, and $V_\sigma\,{\in}\, \listvarset$ a list variable, for each object type $\sigma$ in $\mathit{ON}$.
For every arc $a$ that has as source or target a place $p$ taking objects of type $\sigma$,
we associate color $\tup{\sigma}$ with $p$, and add the arc inscription $\tup{V_\sigma}$ to $a$ if $a$ is a variable arc, and $\tup{v_\sigma}$ otherwise.
This is possible as in the object-centric Petri nets of~\cite{LissAA23,AalstB20}, transitions do not distinguish between objects of the same type.
Initial and final markings $M_{\mathit{init}}$ and $M_{\mathit{final}}$ for $\mathit{ON}$ can be used as such.
%
\end{remark}

In \cite{LissAA23}, alignments are computed assuming that they involve exactly the set of objects mentioned in the log.
The next example shows that this assumption can compromise the existence of alignments, even irrespective of optimality.

\begin{example}
\label{exa:number:objects}
Consider the \mynet $\NN$ from \exaref{order}, and let $M_{\mathit{final}}$ consist of all markings where the right-most two places contain at least one token, and all other places contain no token. The empty run is thus not in the language of $\NN$.
Now consider the log
$L = \{\tup{\eid{0},\m{place\ order}, \{p\}, 1},\tup{\eid{1}.\m{pick\ item}, \{p\}, 2},$ $ \tup{\eid{2},\m{ship}, \{p\}, 3}\}$
 where (say, due to an error in the logging system), the order was not recorded, but only an item $p$ of type \emph{product}.
There exists no alignment involving only object $p$, as every run of $\NN$ must involve an object of type \emph{order}.
\end{example}

In contrast, an \mynet with silent transitions and $\nu$-inscriptions as in \exaref{order} to create objects, admits an optimal alignment also in this case.
We next observe that synchronization in \mynets is quite expressive when tackling conformance.

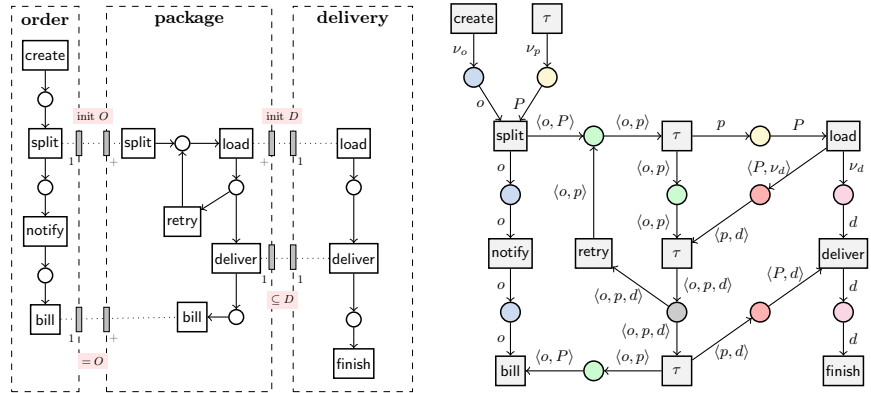
\begin{figure}[t]
\centering
\resizebox{.95\textwidth}{!}{
\begin{tikzpicture}[node distance=15mm]
\begin{scope}[]
\tikzstyle{placex} = [minimum width=4mm, inner sep=0pt]
\node[trans] (geno) {$\mathsf{create}$};
\node[trans, right of=geno] (genp) {$\tau$};
\node[oplace, placex, below of=geno, yshift=3mm] (o0) {};
\node[iplace, placex,below of=genp, yshift=3mm] (p0) {};
\node[trans, below of=p0, xshift=-7.5mm, yshift=3mm] (split) {$\mathsf{split}$};
\node[oiplace, placex,right of=split, xshift=2mm] (pack0) {};
\node[oplace, placex,below of=split, yshift=3mm] (order0) {};
\node[trans, below of=order0, yshift=3mm] (notify) {$\mathsf{notify}$};
\node[oplace, placex,below of=notify, yshift=3mm] (order1) {};
\node[trans, right of=pack0,xshift=2mm] (dummy) {$\tau$};
\node[iplace, placex,right of=dummy, xshift=2mm] (pack1) {};
\node[trans, right of=pack1, xshift=2mm] (load) {$\mathsf{load}$};
\node[oiplace, placex,below of=dummy, yshift=3mm] (pack2) {};
\node[idplace, placex, below of=pack1, yshift=3mm] (pack3) {};
\node[dplace, placex,below of=load, yshift=3mm] (del1) {};
\node[trans, below of=pack2, yshift=3mm] (join) {$\tau$}; 
\node[trans, right of=notify,xshift=2mm] (retry) {$\mathsf{retry}$};                
\node[oidplace, placex,below of=join, yshift=3mm] (pack4) {};
\node[trans, below of=pack4, yshift=3mm] (seal) {$\tau$};                
\node[idplace, placex,right of=pack4, xshift=2mm] (pack5) {};                 
\node[oiplace, placex,left of=seal, xshift=-2mm] (pack6) {}; 
\node[trans, below of=order1, yshift=3mm] (bill) {$\mathsf{bill}$}; 
\node[trans, below of=del1, yshift=3mm] (deliver) {$\mathsf{deliver}$}; 
\node[dplace, placex,below of=deliver, yshift=3mm] (del2) {};
\node[trans, below of=del2, yshift=3mm] (finish) {$\mathsf{finish}$}; 
\draw[arc] (geno) -- node[left, insc] {$\nu_o$} (o0);
\draw[arc] (genp) -- node[left, insc] {$\nu_p$} (p0);
\draw[arc] (o0) -- node[left, insc] {$o$} (split);
\draw[arc] (p0) -- node[left, insc] {$P$} (split);
\draw[arc] (split) -- node[above, insc] {$\tup{o,P}$} (pack0);
\draw[arc] (pack0) -- node[above, insc] {$\tup{o,p}$} (dummy);
\draw[arc] (dummy) -- node[above, insc] {$p$} (pack1);
\draw[arc] (dummy) -- node[left, insc] {$\tup{o,p}$} (pack2);
\draw[arc] (pack1) -- node[above, insc] {$P$} (load);
\draw[arc] (load) -- node[left, insc] {$\tup{P,\nu_d}$} (pack3); 
\draw[arc] (load) -- node[right, insc] {$\nu_d$} (del1); 
\draw[arc] (pack3) -- node[below, insc, xshift=2mm] {$\tup{p,d}$} (join);  
\draw[arc] (pack2) -- node[left, insc] {$\tup{o,p}$} (join);  
\draw[arc] (join) -- node[right, insc] {$\tup{o,p,d}$} (pack4); 
\draw[arc] (pack4) -- node[left, insc, near start, xshift=-1mm] {$\tup{o,p,d}$} (retry); 
\draw[arc] (retry) -- node[left, insc] {$\tup{o,p}$} (pack0);
\draw[arc] (pack4) -- node[left, insc, near start] {$\tup{o,p,d}$} (seal);  
\draw[arc] (seal) -- node[below, insc, xshift=2mm] {$\tup{p,d}$} (pack5);   
\draw[arc] (seal) -- node[above, insc] {$\tup{o,p}$} (pack6);   
\draw[arc] (pack6) -- node[above, insc] {$\tup{o,P}$} (bill); 
\draw[arc] (pack5) -- node[above, insc, xshift=-2mm] {$\tup{P,d}$} (deliver);  
\draw[arc] (del1) -- node[right, insc] {$d$} (deliver);  
\draw[arc] (deliver) -- node[right, insc] {$d$} (del2); 
\draw[arc] (del2) -- node[right, insc] {$d$} (finish); 
\draw[arc] (split) -- node[left, insc] {$o$} (order0);   
\draw[arc] (order0) -- node[left, insc] {$o$} (notify);  
\draw[arc] (notify) -- node[left, insc] {$o$} (order1);   
\draw[arc] (order1) -- node[left, insc] {$o$} (bill);
\end{scope}
\begin{scope}[xshift=-70mm, node distance=9mm, yshift=-6mm]
\tikzstyle{proctrans} = [trans, fill=white]
\tikzstyle{interface} = [draw, rectangle, fill=gray!50, minimum width=.9mm, minimum height=4mm, inner sep=0pt]
\tikzstyle{multiplicity} = [scale=.6, yshift=-5mm]
\tikzstyle{sync} = [scale=.6, fill=red!10, rectangle, inner sep = 3pt]
\tikzstyle{procplace} = [place, minimum width=3mm, inner sep=0pt]
\node[proctrans] (create) {$\mathsf{create}$};
\node[procplace, below of=create] (created) {};
\node[proctrans, below of=created] (split) {$\mathsf{split}$};
\node[procplace, below of=split] (sent) {};
\node[proctrans, below of=sent] (notify) {$\mathsf{notify}$};
\node[procplace, below of=notify] (notified) {};
\node[proctrans, below of=notified] (bill) {$\mathsf{bill}$}; 
\draw[arc] (create) -- (created);
\draw[arc] (created) -- (split);
\draw[arc] (split) -- (sent);
\draw[arc] (sent) -- (notify);
\draw[arc] (notify) -- (notified);
\draw[arc] (notified) -- (bill);
\draw[dashed] (-.55,.8) rectangle (.55, -5.5);
\node[scale=.9] at (0,.6) {\textbf{order}};
\node[scale=.9] at (2.35,.6) {\textbf{package}};
\node[proctrans, right of=split, xshift=10mm] (splitp) {$\mathsf{split}$};
\node[procplace, right of=splitp] (ready) {};
\node[proctrans, right of=ready, xshift=2mm] (load) {$\mathsf{load}$};
\node[procplace, below of=load] (loaded) {};
\node[proctrans, below of=ready, yshift=-7mm] (retry) {$\mathsf{retry}$};
\node[proctrans, below of=loaded, yshift=-5.5mm] (deliver) {$\mathsf{deliver}$};
\node[procplace, below of=deliver, yshift=-3mm] (done) {};
\node[proctrans, left of=done] (billp) {$\mathsf{bill}$};
\draw[arc] (splitp) -- (ready);
\draw[arc] (ready) -- (load);
\draw[arc] (load) -- (loaded);
\draw[arc] (loaded) -- (retry);
\draw[arc] (retry) -- (ready);
\draw[arc] (loaded) -- (deliver);
\draw[arc] (deliver) -- (done);
\draw[arc] (done) -- (billp);
\draw[dashed] (1,.8) rectangle (3.7, -5.5);
\draw[dotted] (split) -- (splitp);
\draw[dotted] (bill) -- (billp);
\node[interface, xshift=5.5mm] (is1) at (split) {}
 node[multiplicity, xshift=-2mm] at (is1) {1};
\node[interface, xshift=5.5mm] (ib1) at (bill) {}
 node[multiplicity, xshift=-2mm] at (ib1) {1};
\node[interface, xshift=10mm] (is2) at (split) {}
 node[multiplicity, xshift=2mm] at (is2) {+};
\node[interface, xshift=10mm] (ib2) at (bill) {}
 node[multiplicity, xshift=2mm] at (ib2) {+};
\node[proctrans, right of=load, xshift=15mm] (loadd) {$\mathsf{load}$};
\node[procplace, below of=loadd] (delivering) {};
\node[proctrans, below of=delivering, yshift=-5.5mm] (deliverd) {$\mathsf{deliver}$};
\node[procplace, below of=delivering, yshift=-18mm] (delivered) {};
\node[proctrans, below of=delivered] (finish) {$\mathsf{finish}$};
\draw[arc] (loadd) -- (delivering);
\draw[arc] (delivering) -- (deliverd);
\draw[arc] (deliverd) -- (delivered);
\draw[arc] (delivered) -- (finish);
\draw[dashed] (4.05,.8) rectangle (6, -5.5);
\node[scale=.9] at (5.025,.6) {\textbf{delivery}};
\draw[dotted] (load) -- (loadd);
\draw[dotted] (deliver) -- (deliverd);
\node[interface, xshift=37mm] (il1) at (split) {}
 node[multiplicity, xshift=-2mm] at (il1) {+};
\node[interface, xshift=40.5mm] (il2) at (split) {}
 node[multiplicity, xshift=2mm] at (il2) {1};
\node[interface] (id1) at (3.7, -3.3) {}
 node[multiplicity, xshift=-2mm, yshift=-1mm] at (id1) {1};
\node[interface] (id2) at (4.05, -3.3) {}
 node[multiplicity, xshift=2mm, yshift=-1mm] at (id2) {1};
\node[sync] at (0.775, -1) {init $O$};
\node[sync] at (0.775, -5) {$=O$};
\node[sync] at (3.875, -1) {init $D$};
\node[sync] at (3.875, -4) {$ \subseteq D$};
\end{scope}
\end{tikzpicture}
}
\caption{A synchronous proclet and a corresponding \mynet.}
\label{fig:proclet}
\end{figure}

\begin{example}
To show the expressiveness of \mynets, we model a proclet inspired from the running example in \cite{Fahland19}---keeping all essential proclet features. 
The proclet $N$ is shown in Fig.~\ref{fig:proclet} on the left, while the \mynet on the right emulates the behaviour of $N$ with few differences:
Packages are created by a silent transition instead within $\mathsf{split}$,
and the exact synchronization ${=}O$ in the $\mathsf{bill}$ transition of $N$ cannot be expressed in \mynets. Instead, all synchronizations are implicitly of subset type.
Let a trace graph $T$ consist of the events
$\tup{\m{create}, \{o\},1}$, 
$\tup{\m{split}, \{o,p\},2}$, 
$\tup{\m{notify}, \{o\},3}$,
$\tup{\m{load}, \{d,p\},4}$, 
$\tup{\m{bill}, \{o\},5}$, 
$\tup{\m{deliver}, \{o, p\},5}$ and
$\tup{\m{finish}, \{d\},6}$.
Product $p$ was ignored in the $\m{bill}$ event, so the exact synchronization demanded by the proclet is violated.
In our \mynet, due to the implicit subset semantics, 
$T$ could be matched against a valid transition sequence, but no final marking would be reached: token $\tup{o,p}$ is left behind in the bottom green place.
Thus no alignment with a synchronous $\m{bill}$ move exists, hence no cost 0 alignment, and the lack of synchronization is in fact detected in the alignment task.
\end{example}

\section{Encoding}
\label{sec:encoding}

In encoding-based conformance checking, it is essential to fix upfront an upper bound on the size of an optimal alignment~\cite{FelliGMRW23}. For \mynets, the next lemma
establishes such a bound. 
We assume that in runs of $\NN$ every object used in a silent transition occurs also in a non-silent one.
The precise statement with a formal proof can be found in~\cite[Lem.~2]{long-version}.

\begin{lemma}
\label{lem:bounds}
%
Let $m$ be the number of object occurrences in $T_X$, $c$ the number of objects in some run of $\NN$, and $k$ the maximal number of subsequent silent transitions without $\nu$ in $\NN$.
The number of moves in $\Gamma|_\mod$ is linear in $|E_X|$, $c$, $m$ and $k$, and $\Gamma|_\mod$ has at most $2c{+}m$ object occurrences in non-silent transitions.
\end{lemma}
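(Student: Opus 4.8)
The plan is to construct, from an arbitrary alignment $\Gamma$ of $T_X$ and $\NN$, an alignment $\Gamma'$ that is no more costly and whose model projection $\Gamma'|_\mod$ respects the claimed bounds; since an optimal alignment exists whenever $\LL(\NN)\neq\emptyset$, this suffices. First I would fix an arbitrary run $\run^\ast$ witnessing that $\LL(\NN)$ is non-empty and let $c$ be the number of objects it uses; this $\run^\ast$ will serve as a ``budget run'' that any optimal alignment can fall back on for the model-only part. The key structural observation is that in an optimal alignment, every non-silent model move must either be synchronous (matching a log event) or else be ``necessary'' to reach a final marking — otherwise it could be deleted together with its downstream consequences, lowering the cost, contradicting optimality. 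Synchronous non-silent moves are in bijection with a subset of the events of $T_X$, so there are at most $m$ object occurrences coming from them (here $m$ counts object occurrences in $T_X$). The purely model-side non-silent moves, after the pruning argument, can be taken from a single run reaching a final marking, and by the assumption that every object in a silent transition also occurs in a non-silent one, such a run needs at most $c$ objects, hence at most, roughly, another $c$ object occurrences contributing to non-silent transitions in the needed ``completion'' part — giving the $2c+m$ bound on object occurrences in non-silent transitions of $\Gamma'|_\mod$.

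For the bound on the \emph{number} of moves (as opposed to object occurrences), I would argue as follows. The synchronous moves are bounded by $|E_X|$. The log moves are bounded by $|E_X|$ as well (one per event, in the worst case). The non-silent model moves are bounded via the object-occurrence bound $2c+m$ together with the fact that each non-silent transition consumes at least one object occurrence. The only remaining contribution is silent model moves; here the assumption on $k$ — the maximal number of consecutive silent transitions without a $\nu$-variable — enters. A maximal chain of silent transitions either contains a $\nu$-transition (which creates a fresh object that must, by assumption, later appear in a non-silent transition, hence is charged against the $2c+m$ budget) or has length at most $k$. So between any two ``charged'' events (synchronous moves, log moves, or non-silent/$\nu$-silent model moves), at most $k$ plain silent moves can occur, and the total number of silent moves is linear in $|E_X|$, $c$, $m$ and $k$. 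Summing the four contributions gives the claimed linear bound on $|\Gamma'|_\mod|$.

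The main obstacle I expect is making the ``pruning'' argument rigorous in the object-centric graph setting: deleting a superfluous model move is not a local operation, because in an \mynet a transition firing can consume and produce tuples that other later transitions depend on, and the alignment is a \emph{graph} rather than a sequence, so one must show that removing an unnecessary sub-run still yields a well-defined run reaching a final marking and a well-defined alignment graph whose log projection is still exactly $T_X$. Concretely, one wants a lemma saying: if $\run$ reaches a final marking and a binding $(t,b)$ in $\run$ is not needed by any later firing and $t$ is not forced by the final-marking condition, then deleting it (and cascading) yields a shorter run still reaching a final marking. I would prove this by a standard commutation/residuation argument on token multisets, using that markings are sets of tuples and firing is monotone in the obvious way, together with the assumption tying silent-transition objects to non-silent ones (which prevents ``orphan'' silent activity that the pruning would otherwise have to handle separately). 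Once this lemma is in place, the counting is routine bookkeeping over the four move categories as sketched above.
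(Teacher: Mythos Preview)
Your approach is substantially more complicated than needed, and the pruning strategy has a real gap.

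The paper does \emph{not} modify the optimal alignment at all. Instead, it builds a trivial comparison alignment $\Gamma_0$ consisting of $|E_X|$ log moves for the events of $T_X$ together with model moves replaying the fixed run $\rho$ (the one that gives $c$). This $\Gamma_0$ has cost exactly $m+c$. Optimality of $\Gamma$ gives $\cost(\Gamma)\le m+c$. From here everything is pure counting: there are at most $|E_X|$ synchronous moves; every non-silent model move contributes cost $\ge 1$, so there are at most $m+c$ of them; between any two of these at most $k$ silent moves without $\nu$ can occur; and silent $\nu$-moves are bounded by the number of objects, which is bounded by the same cost argument applied at the level of object occurrences (each object occurrence in a model or non-matching synchronous move costs $1$, so at most $m+c$ such occurrences, plus at most $m$ ``free'' occurrences that match log objects, hence $\le 2c+m$ in total). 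No structural surgery on $\Gamma$ is required.

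Your plan, by contrast, tries to \emph{transform} $\Gamma$ into a bounded $\Gamma'$ by deleting ``unnecessary'' model moves and falling back on the budget run $\rho^\ast$. The gap is the sentence ``the purely model-side non-silent moves, after the pruning argument, can be taken from a single run reaching a final marking, and \dots\ such a run needs at most $c$ objects.'' Pruning superfluous firings from the model projection of $\Gamma$ yields \emph{some} run to a final marking, but there is no reason that run uses at most $c$ object occurrences: $c$ is a property of the specific witness $\rho^\ast$, not of arbitrary runs. Replacing the model-only portion of $\Gamma$ by $\rho^\ast$ is not pruning either, and in general it does not splice coherently with the model components of the synchronous moves (those came from a different run and may leave tokens that $\rho^\ast$ does not expect). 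So the commutation/residuation lemma you anticipate, even if provable, would not give you the bound in terms of $c$; you would only get a bound in terms of the shortest run, which is a different quantity. The cost-comparison argument sidesteps all of this: it never needs the model projection of $\Gamma$ to resemble $\rho^\ast$, it only needs $\cost(\Gamma)\le\cost(\Gamma_0)$.
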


For our encoding we also need to determine a priori a set of objects $O$ 
that is a superset of the objects used in the model projection of the optimal alignment.
However, we can nevertheless solve problems as in \exaref{number:objects} using the bound on the number of objects  in \lemref{bounds}: e.g. in \exaref{number:objects}, we have $c=3$ and $m=6$, so the bound on object occurrences is 12, while 
the bound on moves is 25.
\newcommand{\transvar}{\mathtt T}
\newcommand{\markvar}{\mathtt M}
\newcommand{\objvar}{\mathtt O}
\newcommand{\distvar}{\mathtt d}
\newcommand{\rlvar}{\mathtt {len}}
\smallskip
\noindent\textbf{Encoding the run.}
Let $\NN = \tup{\otypes, \places, \transitions, \inflow, \outflow, \coloring,\ell}$ \anet and $T_X = \tup{E_X,D_X}$ a trace graph.
In the encoding, we assume the following, using \lemref{bounds}:
\begin{inparaenum}[\itshape (i)]
\item 
The number of nodes in the model projection of an optimal alignment is upper-bounded by a number $n\in \mathbb N$.
\item 
The set $O \subseteq \objects$ is a finite set of objects that might occur in the alignment, it must contain all objects in $T_X$, optionally it can contain more objects.
We assume that every $o\in O$ is assigned a unique id $id(o)\in \{1, \dots ,|O|\}$
\item $K$ is the maximum number of objects involved in any transition firing, it can be computed from $O$ and $\NN$.
\end{inparaenum}

\smallskip
\noindent
To encode a run $\rho$ of length at most $n$, we use the following SMT variables:
\begin{inparaenum}[(a)]
\item Transition variables $\transvar_j$ of type integer for all $1\leq j\leq n$ to identify 
the $j$-th transition in the run. To this end, we enumerate the transitions as $T = \{t_1, \dots, t_L\}$, and add the constraint $\bigwedge_{j=1}^n 1\leq \transvar_j \leq L$,
with the semantics that $\transvar_j$ is assigned value $l$ iff the $j$-th transition in $\rho$ is $t_l$.

\noindent\item To identify the markings in the run, we use marking variables $\markvar_{j,p,\vec o}$ of type boolean for every time point $0\leq j\leq n$, every place $p\in P$, and every vector $\vec o$ of objects with elements in $O$ such that $\coloring(\vec o)=\coloring(p)$. The semantics is that $\markvar_{j,p,\vec o}$ is assigned true iff $\vec o$ occurs in $p$ at time $j$.

\noindent\item To keep track of which objects are used by transitions of the run, we use object variables $\objvar_{j,k}$ of type integer for all $1\leq j\leq n$ and $0\leq k \leq K$ with the constraint $\bigwedge_{j=1}^n 1\leq \objvar_{j,k} \leq |O|$. 
The semantics is that if $\objvar_{j,k}$ is assigned value $i$ then, if $i>0$ the $k$-th object involved in the $j$-th transition is $o_i$, and if $i=0$ then the $j$-th transition uses less than $k$ objects.

\noindent\item To encode the actual length of the run, we use an integer variable $\rlvar$.
\end{inparaenum}

In addition, we use the following variables to represent alignment cost:
\begin{inparaenum}[(e)]
\item Distance variables $\distvar_{i,j}$ of type integer for every $0\leq i\leq m$ and $0\leq j\leq n$, their use will be explained later.
\end{inparaenum}
\smallskip

\noindent
Next we intuitively explain the used constraints: the formal description with all technical formulae of the encoding, as well as the decoding, can be found in~\cite{long-version}.
\begin{inparaenum}[(1)]

\noindent\item 
\emph{Initial markings}.
We ensure that the first marking in the run $\rho$ is initial.

\noindent\item 
\emph{Final markings.} 
Next, we state that after at most $n$ steps, but possibly earlier, a final marking is reached.

\noindent\item 
\emph{Moving tokens.}
Transitions must be enabled; tokens are moved by transitions. 

\noindent\item 
\emph{Tokens not moved by transitions stay in their place.}
Similarly to the previous item, we capture that for every time point, place $p$, and token $\vec o$, the marking does not change for $p$ and $\vec o$ unless it is produced or consumed by some transition.

\noindent \item 
\emph{Transitions use objects of suitable type.}
To this end, recall that every transition can use at most $K$ objects, which limits instantiations of template inscriptions.
For every transition $t\in T$, we can thus enumerate the objects used by it from 1 to $K$.
Depending on the transition $t_l$ performed in the $j$-th step, we thus demand that the object variables $\objvar_{j,k}$ are instantiated by an object of suitable type, using a disjunction over all possible objects.
%

\noindent \item 
\emph{Objects that instantiate $\nu$-variables are fresh.}
Finally, we need to require that if an object is instantiated for a $\nu$-inscription in a transition firing, then this object does not occur in the current marking.
%
\end{inparaenum}

We denote by $\varphi_{run}$ the conjunction of the constraints in (a)--(c) and (1)--(6), as they encode a valid run of $\NN$.

\smallskip
\noindent\textbf{Encoding alignment cost.}
Similar as in~\cite{FelliGMRW23,BoltenhagenCC21}, we encode the cost of an alignment as the edit distance with respect to suitable penalty functions.
Given a trace graph
$T_X = (E_X, D_X)$, let
$e_1, \dots, e_m$ be an enumeration of all events in $E_X$ such that
$\projtime(e_1) \leq \dots \leq \projtime(e_m)$.
Let $[P_L]_i$, $[P_M]_{j}$, and $[P_=]_{i,j}$ be the penalty expressions, for a log move, a model move and a synchronous move, resp.
One can then encode the edit distance as in \cite{FelliGMRW23,BoltenhagenCC21}:
\begin{equation*}
\begin{array}{rl@{\qquad}rl@{\qquad}rl@{\qquad\quad}r}
\distvar_{0,0} &= 0 &
\distvar_{{i+1},0} &= [P_L] + \distvar_{i,0} &
\distvar_{0,{j+1}} &= [P_M]_{j+1} + \distvar_{0,j} 
&\hfill(\varphi_\delta)\\[1ex] 
\distvar_{i+1,j+1} &\multicolumn{5}{l}{=
\min (
[P_=]_{i+1, j+1} + \distvar_{i,j},\ 
[P_L] + \distvar_{i,j+1},\ 
[P_M]_{j+1} + \distvar_{i+1,j})}
\end{array}
\end{equation*}

\smallskip
\noindent\textbf{Solving.}
We use an SMT solver to obtain a satisfying assignment $\alpha$ for the 
following constrained optimization problem $(\Phi)$: \emph{$\varphi_{\mathit{run}} \wedge
\varphi_{\delta}
\text{\quad minimizing\quad }\distvar_{m,n}$}.
%
From $\alpha$ one can then decode the run $\rho(\alpha)$ and the optimal alignment $\Gamma(\alpha)$, similar as in~\cite{FelliGMRW23}. In particular, $\alpha(\distvar_{m,n})$ is the cost of the optimal alignment. 

\begin{theorem}
Given \anet $\NN$, trace graph $\tracenet_X$, and satisfying assignment $\alpha$ to $(\Phi)$, $\Gamma(\alpha)$ is an optimal alignment of $\tracenet_X$ with respect to $\NN$ with cost $\alpha(\distvar_{m,n})$.
\end{theorem}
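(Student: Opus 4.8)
The plan is to establish the two directions linking satisfying assignments of $(\Phi)$ with alignments of $\tracenet_X$ and $\NN$, and then to derive optimality from the minimization objective. Throughout, the parameters $n$, $O$ and $K$ of the encoding are instantiated according to \lemref{bounds}: $n$ bounds the number of nodes of $\Gamma\restrmod$ for an optimal alignment, $O$ contains all objects of $\tracenet_X$ together with enough additional objects to cover the (bounded) object occurrences of such a $\Gamma\restrmod$, and $K$ bounds the number of objects per transition firing. This is exactly what makes it sound to restrict the search to runs of length at most $n$ over objects drawn from $O$ with at most $K$ objects per firing.

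\textbf{Soundness of $\varphi_{run}$.} First I would show that if $\alpha \models \varphi_{run}$, then the decoded sequence $\run(\alpha) = \tup{(t_1,b_1),\dots,(t_\ell,b_\ell)}$ with $\ell = \alpha(\rlvar)$ is a run of $\NN$ whose visible subsequence is in $\LL(\NN)$. This is argued constraint group by constraint group: group~(1) forces the marking encoded by the variables $\markvar_{0,\cdot,\cdot}$ to be initial; groups~(5)--(6) ensure that the variables $\objvar_{j,\cdot}$ describe a well-typed binding $b_j$ that is fresh on the $\nu$-variables; group~(3) ensures that $t_j$ is enabled under $b_j$ in the marking encoded at time $j{-}1$ and that the marking at time $j$ is the one obtained by firing $t_j$ with $b_j$; group~(4), the frame condition, ensures that no token changes except those produced or consumed, so that the $\markvar_{j,\cdot,\cdot}$ indeed encode the markings of the firing sequence; and group~(2) forces a final marking to be reached within $n$ steps. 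Hence $\run(\alpha)$ is an accepted run of $\NN$.

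\textbf{Correctness of $\varphi_\delta$ and decoding the alignment.} Next I would analyse the distance constraints relative to the timestamp-ordered enumeration $e_1,\dots,e_m$ of $E_X$. The recurrence $(\varphi_\delta)$ is the standard edit-distance recurrence between the word $e_1,\dots,e_m$ and the word $(t_1,b_1),\dots,(t_\ell,b_\ell)$ with per-step costs given by the penalty expressions $[P_L]$, $[P_M]_{j}$ and $[P_=]_{i,j}$; so in any model, $\distvar_{m,n}$ equals the cost of a cheapest monotone matching of the two words, a synchronous pair being admissible only when the event and the transition firing carry the same activity and the same object set (otherwise $[P_=]$ is set prohibitively large). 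The minimizing matching is decoded into the move graph $\Gamma(\alpha)$: its nodes are the induced moves, its edges on the log side are inherited from $D_X$ and on the model side from the successor relation of $\run(\alpha)$. One then checks that $\Gamma(\alpha)$ is acyclic (both $D_X$ and the run successor relation are compatible with the chosen linearizations, since trace-graph edges follow timestamp order along $\projtrace$), that $\Gamma(\alpha)\restrlog \cong \tracenet_X$ (every event occurs exactly once as a log component, and the $\skippath{\log}$-shortcuts over model moves recover exactly the edges $D_X$), and that $\Gamma(\alpha)\restrmod$ is the visible subsequence of $\run(\alpha)$ together with the bijection $f$ demanded by the definition of alignment. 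The penalty functions are chosen so that the edit distance equals $\cost(\Gamma(\alpha))$ in the sense of \defref{cost}; in particular $\alpha(\distvar_{m,n}) = \cost(\Gamma(\alpha))$.

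\textbf{Completeness and optimality.} Conversely, let $\Gamma'$ be any alignment of $\tracenet_X$ and $\NN$. Its model projection determines a run $\run'$ with $\run'_v \in \LL(\NN)$, and by \lemref{bounds} (applied to an optimal alignment, whose cost does not exceed $\cost(\Gamma')$) we may assume $|\run'|\le n$ and that all objects of $\run'$ lie in $O$. Linearizing $\run'$ and reading off the markings yields an assignment satisfying $\varphi_{run}$, while the log and synchronous moves of $\Gamma'$ induce a monotone matching of $e_1,\dots,e_m$ with $\run'$ of cost $\cost(\Gamma')$; hence $(\Phi)$ has a model with $\distvar_{m,n} = \cost(\Gamma')$. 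Therefore the optimum of $(\Phi)$ is at most $\cost(\Gamma')$ for every alignment $\Gamma'$, and since $\Gamma(\alpha)$ is itself an alignment with $\cost(\Gamma(\alpha)) = \alpha(\distvar_{m,n})$, it is optimal with the stated cost. The main obstacle is the bridge between the graph-based notion of alignment and the sequence-based edit-distance recurrence: one has to show that \emph{every} alignment graph serializes into a pair of timestamp/run-respecting words without changing its cost, that the recurrence ranges over exactly such serializations, and that the $\skippath{\log}$ and $\skippath{\mod}$ shortcuts in the projections faithfully recover $\tracenet_X$ and the run. A related but more routine subtlety is that $\tracenet_X$ is a DAG rather than a totally ordered sequence, so one must argue that any linearization compatible with timestamps yields isomorphic log projections.
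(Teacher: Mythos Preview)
Your sketch is correct and shares the paper's overall decomposition: first establish that any model of $\varphi_{\mathit{run}}$ decodes to an accepted run of $\NN$ (the paper isolates this as a separate lemma with the same constraint-by-constraint argument you outline), then reason about $\varphi_\delta$ and the decoding of $\Gamma(\alpha)$. The difference is in how optimality is obtained. The paper fixes the timestamp-ordered enumeration $e_1,\dots,e_m$ of $E_X$ and proves, by induction on the pair $(i,j)$, that the decoded graph $\Gamma(\gamma_{i,j})$ is an optimal alignment of the restriction $T_X|_i$ against the prefix $\rho_\alpha|_j$ with cost $\alpha(\distvar_{i,j})$; optimality over \emph{all} runs is then left implicit in the SMT minimization of $\distvar_{m,n}$. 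You instead argue two global directions: soundness (every model of $(\Phi)$ yields an alignment of the stated cost) and completeness (every alignment, in particular an optimal one bounded via \lemref{bounds}, can be re-encoded as a model of $(\Phi)$ of the same cost), and conclude optimality from the minimization. Your route has the advantage of making the role of \lemref{bounds} and of completeness of the encoding explicit, which the paper's inductive argument leaves tacit; the paper's route is more self-contained once $\rho_\alpha$ is fixed, since the dynamic-programming induction directly certifies that $\alpha(\distvar_{m,n})$ is the minimum over all move sequences for that run. One small slip in your completeness paragraph: the bounds from \lemref{bounds} apply only to an \emph{optimal} alignment, not to an arbitrary $\Gamma'$, so you should quantify over optimal $\Gamma'$ there (which is all that is needed).
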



%


\section{Evaluation}
\label{sec:evaluation}
\noindent\textbf{Implementation.}
Our tool \thetool is implemented in Python as a branch of \cocomot.%
\footnote{See \texttt{\url{https://github.com/bytekid/cocomot/tree/object-centric}}.}
\thetool uses the SMT solver  
\yices \textsf{2}~\cite{Dutertre14} as backend.
As inputs, the tool takes an \mynet and an object-centric 
event log. 
The former we represent as a \texttt{pnml} file with minimal adaptations, namely every place has an attribute \emph{color}, which is a tuple of types in $\Sigma$; and every arc has an \emph{inscription} attribute that is a list of typed variables as in \defref{inscription}.
The 
event log is represented as a \texttt{jsonocel} file, in the format used in~\cite{LissAA23,AalstB20}.
The tool provides a simple command-line interface, and it outputs a textual representation of the decoded run of the \mynet, as well as the alignment and its cost.
The encoding in our implementation is as described in~\secref{encoding}, apart from
some simple optimizations that introduce additional SMT variables for subexpressions that occur multiple times.

\smallskip
\noindent\textbf{Experiments.}
We followed \cite{LissAA23} and used the BPI 2017 event data~\cite{BPI2017} restricted to the most frequent 50\% of activities. This resulted in 715 variants after applying \texttt{pm4py}'s variant filtering as in \cite{LissAA23}. The resulting traces have 3 to 23 events (12 on average), and 2 to 11 objects (5 on average). The \mynet for the experiments was obtained by augmenting their object-centric Petri net with arc inscriptions as remarked in \remref{OPIfromtheirnet}.
The evaluation was performed single-threaded but distributed on a 12-core
Intel i7-5930K 3.50GHz machine with 32GB of main memory.

\figref{results} summarizes the results of the experiments, showing the runtime in seconds in relationship to the number of events, the number of objects, and the cost of the optimal alignment. Note that the scale on the y-axis is logarithmic, the x-axis is linear.
The runtime is on average about 200 seconds, with maximal runtime of 7900 seconds.
While the tool of~\cite{LissAA23} is faster in comparison, the overhead seems acceptable given that \thetool performs a considerably more complex task. Experiment data and examples can be found on github.

\begin{figure}[t]
\begin{tikzpicture}
\begin{scope}
\begin{semilogyaxis}[
xmin = 0, xmax = 23,
ymin = 0.1, ymax = 23000,
ytick={0.1,1,100,10000},
yticklabels={0.1,1,100, $10^4$},
width = 49mm,
height = 49mm,
mark size=.8pt,
font=\tiny,
ylabel= {time in seconds},
xlabel={number of events}
]
\addplot[only marks,red!80!black, mark=*] file[skip first] {data/events_vs_time.dat};
\end{semilogyaxis}
\end{scope}
\begin{scope}[xshift=40mm]
\begin{semilogyaxis}[
xmin = 0, xmax = 12,
ymin = 0.1, ymax = 23000,
ytick={0.1,1,100,10000},
yticklabels={0.1,1,100, $10^4$},
width = 49mm,
height = 49mm,
mark size=.8pt,
font=\tiny,
xlabel={number of objects}
]
\addplot[only marks,blue!80!black, mark=o] file[skip first] {data/objects_vs_time.dat};
\end{semilogyaxis}
\end{scope}
\begin{scope}[xshift=80mm]
\begin{semilogyaxis}[
xmin = 0, xmax = 43,
ymin = 0.1, ymax = 23000,
ytick={0.1,1,100,10000},
yticklabels={0.1,1,100, $10^4$},
width = 49mm,
height = 49mm,
mark size=.8pt,
font=\tiny,
xlabel={alignment cost}
]
\addplot[only marks,green!80!black, mark=o] file[skip first] {data/distance_vs_time.dat};
\end{semilogyaxis}
\end{scope}
\end{tikzpicture}
\caption{Run time related to the number of events, objects, and alignment cost.\label{fig:results}}
\end{figure}
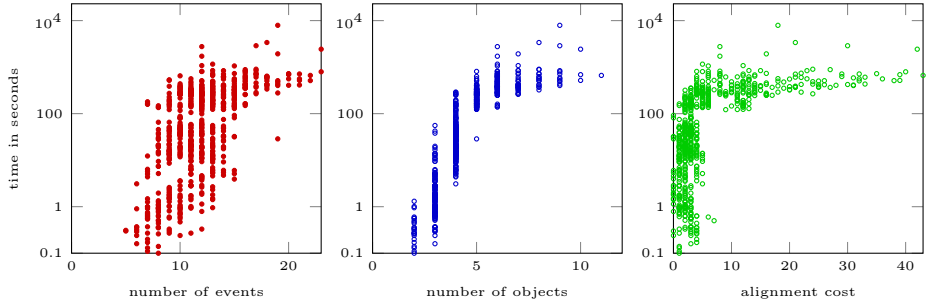

\section{Conclusion}
\label{sec:conclusion}

In this work we have defined the new formalism of \mynets for object-centric processes, supporting the essential modelling features in this space, as described in \secref{related}. We have then defined alignment-based conformance for this model.
We can thus answer the research questions posed in the introduction affirmatively:
\begin{inparaenum}
\item[(1)] \mynets constitute a succinct, Petri net-based formalism for object-centric processes that supports object identity, arcs with multiplicity, and synchronization,
\item[(2)] we formally defined the conformance checking problem for \mynets, and 
\item[(3)] SMT encodings are a feasible, operational technique to implement this notion of conformance checking, as witnessed by our experiments.
\end{inparaenum}

In future work, we want to extend \mynets along two directions. First, we aim at supporting data and data-aware conditions, exploiting the fact that our approach seamlessly integrates with \textsf{CoCoMoT}. 
Second, while exact synchronization is so far tackled only when computing alignment, we want to make it part of the model. This is non trivial: \mynets need to be equipped with a form of wholeplace operation, which requires universal quantification in the SMT encoding.

\bibliographystyle{splncs04}
\bibliography{references}

\newpage
\appendix

\section{Bounds on Optimal Alignment}
\label{sec:bounds}

\begin{lemma}
\label{lem:bounds:precise}
Let  $\NN$ be \anet and $T_X=\tup{E_X,D_X}$ a trace graph with optimal alignment $\Gamma$.
Let $m=\sum_{e\in E_X} |\projobj(e)|$ the number of object occurrences in $E_X$, and 
$c = \sum_{i=1}^n |\dom(b_i)|$ the number of object occurrences in some run  $\rho$ of $\NN$, with $\rho_v=\tup{\tup{t_1,b_1}, \dots \tup{t_n,b_n}}$.
Then $\Gamma|_\mod$ has 
at most $(|E_X|+c+m)(k+1)$ moves if $\NN$ has no $\nu$-inscriptions, and at most $(|E_X|+3c+2m)(k+1)$ otherwise, where $k$ is the longest sequence of silent transitions without $\nu$-inscriptions in $\NN$.
Moreover, $\Gamma|_\mod$ has at most $2c+m$ object occurrences in non-silent transitions.
\end{lemma}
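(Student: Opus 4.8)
The plan is to play the optimal $\Gamma$ off against a concrete baseline alignment built from the witness run, and then to dissect the run that underlies $\Gamma$ by classifying its firings. First I would fix a run $\rho$ with $\rho_v\in\LL(\NN)$ and $c$ object occurrences (such a run exists since $\LL(\NN)\neq\emptyset$), and build a baseline alignment $\Gamma_0$ whose log moves are exactly all events of $T_X$ (so the log move for an event $e$ has cost $|\projobj(e)|$) and whose model moves are exactly the firings of $\rho$, laid out in the DAG so that the whole model chain precedes every log move. A direct check against the definitions of the projections shows $\Gamma_0\restrlog=T_X$ and that the model projection of $\Gamma_0$ corresponds to $\rho$, so $\Gamma_0$ is a valid alignment; since silent firings are free and the visible firings of $\rho$ contribute at most $c$ in total, $\cost(\Gamma_0)\le m+c$, and hence $\cost(\Gamma)\le m+c$ by optimality.

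Next, let $\rho^\ast$ be the run underlying $\Gamma$, so that the nodes of $\Gamma\restrmod$ are precisely the firings of $\rho^\ast$. I would partition these firings into: (i) visible firings matched with an event by a synchronous move; (ii) visible firings that are not matched, i.e.\ non-silent model moves; and (iii) silent firings. Distinct synchronous moves match distinct events and carry the object set $\projobj(e)$ of the matched event, so (i) accounts for at most $|E_X|$ firings and at most $m$ object occurrences. Each firing in (ii) is a non-silent model move whose cost equals its number of object occurrences, so by $\cost(\Gamma)\le m+c$ the firings in (ii) account for at most $m+c$ firings and at most $m+c$ object occurrences. Adding (i) and (ii) yields the bound on object occurrences in non-silent transitions with the constants as in the statement.

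For the bound on the total number of moves it remains to count the silent firings (iii). Here I would invoke the standing assumption that every object used by a silent transition also occurs in a non-silent one: a silent firing instantiating a $\nu$-inscription produces a fresh object, distinct such firings produce distinct objects, and each of these reappears in a non-silent firing, so their number is bounded by the object-occurrence bound just obtained (and is $0$ if $\NN$ has no $\nu$-inscriptions). Calling a firing an \emph{anchor} when it is non-silent or a silent $\nu$-firing, the definition of $k$ ensures that at most $k$ silent $\nu$-free firings occur between two consecutive anchors and before the first / after the last one, whence $|\rho^\ast|$, and thus the number of moves in $\Gamma\restrmod$, is at most $(\#\mathrm{anchors}+1)(k+1)$, which is linear in $|E_X|$, $m$, $c$ and gives the two cases of the statement.

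The step I expect to be the main obstacle is precisely this silent-transition accounting: making the $k$-bound airtight (it presupposes that $\NN$ admits no unbounded silent $\nu$-free firing sequence, otherwise the claim is vacuous), and handling template inscriptions carefully, since a single firing may involve an entire list of objects, so that "object occurrence" has to be counted consistently in the cost function, in the quantity $c$, and in the final tally.
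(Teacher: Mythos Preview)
Your proposal is correct and follows essentially the same route as the paper: build a baseline alignment $\Gamma_0$ from $T_X$ and the witness run to get $\cost(\Gamma)\le m+c$, then partition the firings of the run underlying $\Gamma$ into synchronous, visible model, silent-$\nu$, and silent-non-$\nu$ moves, bounding the last class via the gap parameter $k$ between ``anchors''. The only discrepancy is bookkeeping in the constants---your counting of synchronous-move object occurrences naturally yields $2m+c$ rather than the paper's stated $2c+m$, and your anchor bound picks up an extra $+1$---but these do not affect the argument or the linearity conclusion.
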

\begin{proof}
Let $\Gamma_0$ be the alignment that consists of $|E_X|$ log moves with the events in $E_X$ and $n$ model moves using the transition firings in $\pi$.
By assumption, $\cost(\Gamma_0)=m+c$, and obviously $\cost(\Gamma) \leq \cost(\Gamma_0)$.
There can be at most $|E_X|$ synchronous moves of cost 0.
In addition, there can be up to $c+m$ non-silent model moves in $\Gamma$ to arrive at a cost of at most $c+m$, as every such move has cost at least 1.
These make $|E_X|+c+m$ moves in total; but in between each of these, as well as before and afterwards, there can be at most $k$ silent moves.

Now suppose that the OPI contains $\nu$-inscriptions. 
By the above assumption, every object is mentioned in at least one non-silent transition.
For an object $o$ that occurs in $\Gamma\restrmod$, we can distinguish two situations:
if it appears in a synchronous move $((a_\log,O_\log),(a_\mod,O_\mod))$ such that also $o \in O_\log$, then $o$ does not contribute to $\cost(M)$;
but if it appears in a synchronous move with $o \not\in O_\log$ or in a model move, it contributes 1 to $\cost(\Gamma)$.
Since there are $c$ objects in $E_X$, $c$ objects can occur in the former kind of situations (with no cost).
In the latter kind of situations, at most $c+m$ objects can occur, as every occurrence incurs a cost of 1, and $\cost(\Gamma) \leq \cost(\Gamma_0)$.
In the worst case where all these objects are distinct, $2c+m$ objects occur overall.
As above, there are at most $|E_X|$ synchronous moves and $c+m$ non-silent model moves.
There can be at most $2c+m$ silent model moves with outgoing $\nu$ edges, as there are at most that many objects.
These make $|E_X|+3c+2m$ moves in total; but in between each of these, as well as before and afterwards, there can be at most $k$ silent moves,
which shows the claim.
\qed
\end{proof}

\section{Encoding}

In this section we detail the encoding of \secref{encoding}.

\noindent\textbf{Variables.}
We start by fixing the set of variables used to represent the (unknown) model run and alignment:
\begin{compactenum}[(a)]
\item Transition variables $\transvar_j$ of type integer for all $1\leq j\leq n$ to identify 
the $j$-th transition in the run. To this end, we enumerate the transitions as $T = \{t_1, \dots, t_L\}$, and add the constraint $\bigwedge_{j=1}^n 1\leq \transvar_j \leq L$,
with the semantics that $\transvar_j$ is assigned value $l$ iff the $j$-th transition in $\rho$ is $t_l$.
\item To identify the markings in the run, we use marking variables $\markvar_{j,p,\vec o}$ of type boolean for every time point $0\leq j\leq n$, every place $p\in P$, and every vector $\vec o$ of objects with elements in $O$ such that $\coloring(\vec o)=\coloring(p)$. The semantics is that $\markvar_{j,p,\vec o}$ is assigned true iff $\vec o$ occurs in $p$ at time $j$.
\item To keep track of which objects are used by transitions of the run, we use object variables $\objvar_{j,k}$ of type integer for all $1\leq j\leq n$ and $0\leq k \leq K$ with the constraint $\bigwedge_{j=1}^n 1\leq \objvar_{j,k} \leq |O|$. 
The semantics is that if $\objvar_{j,k}$ is assigned value $i$ then, if $i>0$ the $k$-th object involved in the $j$-th transition is $o_i$, and if $i=0$ then the $j$-th transition uses less than $k$ objects.
\end{compactenum}
In addition, we use the following variables to represent alignment cost:
\begin{compactenum}
\item[(d)] Distance variables $\distvar_{i,j}$ of type integer for every $0\leq i\leq m$ and $0\leq j\leq n$, their use will be explained later.
\end{compactenum}
\smallskip

\noindent\textbf{Constraints.}
We use the following constraints on the variables defined above:
\begin{compactenum}[(1)]
\item 
\emph{Initial markings}.
We first need to ensure that the first marking in the run $\rho$ is initial.
By the expression $[\vec o \in M(p)]$ we abbreviate $\top$ if an object tuple $\vec o$ occurs in the $M(p)$, and $\bot$ otherwise.
\begin{align}
\label{eq:phi:initial}
\tag{$\varphi_{\mathit{init}}$}
\textstyle
\bigvee_{M \in M_{init}}
\bigwedge_{p\in P} \bigwedge_{\vec o \in \vec O_{\coloring(p)}} \markvar_{0,p, \vec o} = [\vec o \in M(p)] 
\end{align}
\item 
\emph{Final markings.} 
Next, we state that after at most $n$ steps, but possibly earlier, a final marking is reached.
\begin{align}
\label{eq:phi:final}
\tag{$\varphi_{\mathit{fin}}$}
\bigvee_{0 \leq j \leq n}
\bigvee_{M \in M_{final}}
\textstyle\bigwedge_{p\in P} \bigwedge_{\vec o \in \vec O_{\coloring(p)}} \markvar_{j,p, \vec o} = [\vec o \in M(p)] 
\end{align}
 \item 
\emph{Moving tokens.}
Transitions must be enabled, and tokens are moved by transitions. We encode this as follows:
\newcommand\prodtoken{\mathit{prod}\text{-}{tok}}
\newcommand\constoken{\mathit{cons}\text{-}{tok}}
\begin{align}
\bigwedge_{j=1}^n \bigwedge_{l=1}^L \transvar_j=l \rightarrow 
&\bigwedge_{p \in \pre{t_l}\setminus \post{t_l}} \bigwedge_{\vec o\in \vec O_{\coloring(p)}} (\constoken(p,t_l,j,\vec o) \to  \markvar_{j-1,p,\vec o} \wedge \neg\markvar_{j,p,\vec o}) \wedge {}
\notag\\
&\bigwedge_{p \in \pre{t_l}\cap \post{t_l}} \bigwedge_{\vec o\in \vec O_{\coloring(p)}} (\constoken(p,t_l,j,\vec o) \to  \markvar_{j-1,p,\vec o}) \wedge {}
\notag \\
&\bigwedge_{p \in \post{t_l}} \bigwedge_{\vec o\in \vec O_{\coloring(p)}} 
(\prodtoken(p,t_l,j,\vec o) \to  \markvar_{j,p,\vec o})
\tag{$\varphi_{\mathit{move}}$}
\label{eq:phi:tokens}
\end{align}
where $\constoken(p,t,j,\vec o)$ expresses that token $\vec o$ is consumed from $p$ in the $j$th transition which is $t$, and similarly
$\prodtoken(p,t,j,\vec o)$ expresses that token $\vec o$ is produced.
Formally, this is encoded as follows: first, suppose that $\inflow(p,t) = (v_1, \dots, v_h)$ is a non-variable flow, and let
$(k_1, \dots, k_h)$ be the object indices for $t$ of $v_1, \dots, v_h$. 
Then
$\constoken(p,t,j,\vec o) := (\bigwedge_{i=1}^{h} \objvar_{j,k_i} = id(\vec o_i))$, i.e., we demand that every variable used in the transition is instantiated to the respective object in $\vec o$.
If $\inflow(p,t) = (V_1, \dots, v_h)$ is a variable flow, suppose without loss of generality that $V_1\in \listvarset$.
Variable $V_1$ can be instantiated by multiple objects in a transition firing.
This is also reflected by the fact that there are several (but at most $K$) inscription indices
corresponding to instantiations of $V_1$, say $\ell_1, \dots, \ell_x$.
For $k_i$ as above for $i>1$, we then set
$\constoken(p,t,j,\vec o) := (\bigwedge_{i=2}^{h} \objvar_{j,k_i} = id(\vec o_i)) \wedge \bigvee_{i=1}^x \objvar_{j,\ell_i} = id(\vec o_1)$.
The shorthand $\prodtoken$ is encoded similarly, using $\outflow(t,p)$.
 \item 
\emph{Tokens that are not moved by transitions remain in their place.}
\begin{align}
\notag
\bigwedge_{j=1}^{n+1} 
\bigwedge_{p \in P}
\bigwedge_{\vec o\in \vec O_{\coloring(p)}} (\markvar_{j-1,p,\vec o} \leftrightarrow \markvar_{j,p,\vec o}) \vee &\bigvee_{t_l \in \post{p}} (\transvar_j\eqn l \wedge \constoken(p,t,j,\vec o) ) \vee {}\\
\label{eq:phi:inertia}
\tag{$\varphi_{\mathit{rem}}$}
& \bigvee_{t_l \in \pre{p}} (\transvar_j\eqn l \wedge \prodtoken(p,t,j,\vec o) ) 
\end{align}
 \item 
\emph{Transitions use objects of suitable type.}
To this end, recall that every transition can use at most $K$ objects, which limits instantiations of template inscriptions.
For every transition $t\in T$, we can thus enumerate the objects used by it from 1 to $K$.
However, some of these objects may be unused. We use the shorthand $\mathit{needed}_{t,k}$ to express this: $\mathit{needed}_{t,k} = \top$ if the $k$-th object is necessary for transition $t$ because it occurs in a simple inscription, and $\bot$ otherwise.
Moreover, let $\mathit{ttype}(t,k)$ be the type of the $k$-th object used by transition $t$.
Finally, we denote by $O_\sigma$ the subset of objects in $O$ of type $\sigma$.
\begin{equation}
\label{eq:phi:type}
\tag{$\varphi_{\mathit{type}}$}
\bigwedge_{j=1}^n \bigwedge_{l=1}^L \transvar_j=l \rightarrow \bigwedge_{k=1}^K \left ((\neg [\mathit{needed}_{t_l,k}] \wedge \objvar_{j,k}=0) \vee \bigvee_{o \in O_{\mathit{ttype}(t_l,k)}} \objvar_{j,k}=id(o)\right)
\end{equation}
 \item 
\emph{Objects that instantiate $\nu$-variables are fresh.}
We assume in the following constraint that $tids_\nu$ is the set of all $1 \leq l \leq L$ such that $t_l$ has an outgoing $\nu$-inscription, and that every such $t_l$ has only one outgoing $\nu$-inscription $\nu_t$, and we assume w.l.o.g. that in the enumeration of objects of $t$, $\nu_t$ is the first object. However, the constraint can be easily generalized to more such inscriptions.
\begin{equation}
\label{eq:phi:fresh}
\tag{$\varphi_{\mathit{fresh}}$}
\bigwedge_{j=1}^n \bigwedge_{l\in tids_\nu} \bigwedge_{o\in O_{\otype(\nu_t)}} \transvar_j=l \wedge \objvar_{j,1} = id(o)  \rightarrow 
(\bigwedge_{p\in P} \bigwedge_{\vec o \in \vec O_{\coloring(p)}, o \in \vec o} \neg \markvar_{j-1,p,\vec o})
\end{equation}
\end{compactenum}

\noindent\textbf{Encoding alignment cost.}
Similar as in~\cite{FelliGMRW23,BoltenhagenCC21}, we encode the cost of an alignment as the edit distance with respect to suitable penalty functions $P_=$, $P_M$, and $P_L$.
Given a trace graph
$T_X = (E_X, D_X)$, let
\begin{equation}
\logtrace = \tup{e_1, \dots, e_m}
\label{eq:ordered:trace}
\end{equation}
be an enumeration of all events in $E_X$ such that
$\projtime(e_1) \leq \dots \leq \projtime(e_m)$.
Let the penalty expressions $[P_L]_i$, $[P_M]_{j}$, and $[P_=]_{i,j}$
be as follows, for all $1\leq i \leq m$ and $1 \leq j \leq n$:
\begin{align*}
[P_L]_{i} &= |\projobj(e_i)| \qquad\qquad [P_=]_{i,j} = 
ite(\mathit{is\_labelled}(j, \projact(e_i)), 0, \infty)\\
[P_M]_{j} &=
ite(\mathit{is\_labelled}(j, \tau)
, 0, \Sigma_{k=1}^K ite(\objvar_{j,k} \neq 0, 1, 0))
\end{align*}
where $\mathit{is\_labelled}(j,a)$ expresses that the $j$-the transition has label $a\in \activities \cup \{\tau\}$, which can be done by taking $is\_labelled(j,a):= \bigvee_{l\in T_{idx}(a)}\transvar_j = l$
where $T_{idx}(a)$ is the set of transition indices with label $a$, i.e., the set of all $l$ with $t_l \in T$ such that $\ell(t_l) = a$.

Using these expressions, one can encode the edit distance as in \cite{FelliGMRW23,BoltenhagenCC21}:
\begin{equation}
 \label{eq:delta}
\begin{array}{rl@{\qquad}rl@{\qquad}rl@{\qquad\quad}r}
\distvar_{0,0} &= 0 &
\distvar_{{i+1},0} &= [P_L] + \distvar_{i,0} &
\distvar_{0,{j+1}} &= [P_M]_{j+1} + \distvar_{0,j} 
\\[1ex]
\distvar_{i+1,j+1} &\multicolumn{5}{l}{=
\min (
[P_=]_{i+1, j+1} + \distvar_{i,j},\ 
[P_L] + \distvar_{i,j+1},\ 
[P_M]_{j+1} + \distvar_{i+1,j})}
\end{array}
\tag{$\varphi_\delta$}
\end{equation}

\noindent\textbf{Solving.}
We abbreviate 
$\varphi_{\mathit{run}} = \varphi_{\mathit{init}} \wedge \varphi_{\mathit{fin}} \wedge \varphi_{\mathit{move}} \wedge \varphi_{\mathit{rem}} \wedge \varphi_{\mathit{type}} \wedge \varphi_{\mathit{fresh}}$ and
use an SMT solver to obtain a satisfying assignment $\alpha$ for the 
following constrained optimization problem: 

\begin{align*}
\label{eq:constraints}
\varphi_{\mathit{run}} \wedge
\varphi_{\delta}
\text{\quad minimizing\quad }\distvar_{m,n}
\tag{$\Phi$}
\end{align*}

\noindent\textbf{Decoding.}
From an assignment $\alpha$ satisfying \eqref{eq:constraints}, we next define a run $\rho_\alpha$ and an alignment $\Gamma_\alpha$.
First, we note the following:
From the result of \secref{bounds}, we can obtain a number $M$ such that $M$ is the maximal number of objects used to instantiate a list variable in the model run and alignment. By convention, we may assume that in the enumeration of objects used in the $j$th transition firing, $\objvar_{j, |O|-M+1}, \dots, \objvar_{j, |O|}$ are those instantiating a list variable, if there is a list variable in $\invars{t_{\alpha(\transvar_j)}} \cup \outvars{t_{\alpha(\transvar_j)}}$.

We assume the set of transitions $T=\set{t_1, \dots, t_L}$ is ordered as $t_1, \dots, t_L$ in some arbitrary but fixed way that was already used for the encoding. 

\begin{definition}[Decoded run]
\label{def:decoded:run}
For $\alpha$ satisfying \eqref{eq:constraints}, let the \emph{decoded process run} be
$\rho_\alpha =  \tup{f_1, \dots, f_n}$ such that for all $1\leq j \leq n$,  $f_j = (\widehat t_j, b_j)$, where $\widehat t_j  = t_{\alpha(\transvar_j)}$ and $b_j$ is defined as follows:
Assuming that $\invars{t_{\alpha(\transvar_j)}} \cup \outvars{t_{\alpha(\transvar_j)}}$ is ordered as $v_1, \dots, v_k$ in an arbitrary but fixed way that was already considered for the encoding,
we set $b_j(v_i) = \alpha(\objvar_{j,i})$ if $v_i \in \varset$, and 
$b_j(v_i) = [O_{\alpha(\objvar_{j,|O|-M+1})}, \dots, O_{\alpha(\objvar_{j,|O|-M+z})}]$ if $v_i \in \varset$, where $0 \leq z < M$ is maximal such that $\alpha(\objvar_{j,|O|-M+z}) \neq 0$.
\end{definition}

At this point, $\rho_\alpha$ is actually just a sequence; we will show below that it is indeed a process run of $\NN$.
Next, given  a satisfying assignment $\alpha$ for \eqref{eq:constraints}, we define an alignment of the log trace $T_X$ and the process run $\rho_\alpha$.

\begin{definition}[Decoded alignment]
\label{def:decoded:alignment}
For $\alpha$ satisfying \eqref{eq:constraints}, $\rho_\alpha =  \tup{f_1, \dots, f_n}$ as defined above, and $\logtrace$ as in \eqref{eq:ordered:trace},
consider the sequence of moves $\gamma_{i,j}$ recursively defined as follows:
\begin{align*}
\gamma_{0,0} &=\epsilon \qquad
\gamma_{i+1,0}= \gamma_{i,0} \cdot \tup{e_{i+1}, \SKIP} \qquad
\gamma_{0,j+1}= \gamma_{0,j} \cdot \tup{\SKIP, f_{j+1}} \\
\gamma_{i+1,j+1} &= 
\begin{cases}
\gamma_{i,j+1} \cdot \tup{e_{i+1}, \SKIP} &
 \text{ if }\alpha(\distvar_{i+1,j+1}) = \alpha([P_L] + \distvar_{i,j+1}) \\
\gamma_{i+1,j} \cdot \tup{\SKIP, f_{j+1}} &
 \text{ if otherwise }\alpha(\distvar_{i+1,j+1}) = \alpha([P_M]_{j+1} + \distvar_{i+1,j}) \\
\gamma_{i,j} \cdot \tup{e_{i+1}, f_{j+1}} &
 \text{ otherwise}
\end{cases}
\end{align*}
Given $\gamma_{i,j}$, we define a 
graph $\Gamma(\gamma_{i,j})=\tup{C,B}$ of moves as follows: the node set $C$ consists of all moves in $\gamma_{i,j}$, and there is an edge $\tup{\tup{q,r}, \tup{q',r'}} \in B$ if either $q\neq \SKIP$, $q' \neq \SKIP$ and there is an edge $q \to q'$ in $T_X$, or
if $r\neq \SKIP$, $r' \neq \SKIP$, $r=f_h$, and $r'=f_{h+1}$ for some $h$ with $1 \leq h < n$.
Finally, we define the decoded alignment as $\Gamma(\alpha) := \Gamma(\gamma_{m,n})$.
\end{definition}
In fact, as defined, $\Gamma(\alpha)$ is just a graph of moves, it yet has to be shown that it is a proper alignment. This will be done in the next section.
\smallskip

\noindent\textbf{Correctness.}
In the remainder of this section, we will prove that $\rho_\alpha$ is indeed a run, and $\Gamma(\alpha)$ is an alignment of $T_X$ and $\rho_\alpha$. We first show the former:

\begin{lemma}
\label{lem:decode:run}
Let $\NN$ be \anet, $T_X$ a log trace and $\alpha$ a solution to $\eqref{eq:constraints}$. Then $\rho_\alpha$ is a run of $\NN$.
\end{lemma}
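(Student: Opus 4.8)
The plan is to show that the decoded sequence $\rho_\alpha = \tup{f_1, \dots, f_n}$ is a run by exhibiting the sequence of markings it traverses and checking, step by step, that each $f_j = (\widehat t_j, b_j)$ is enabled in the marking reached so far and that firing it yields the next marking. Concretely, I would first define, for each $0 \leq j \leq n$, a candidate marking $M_j$ by reading it off from the marking variables: $M_j(p) = \{\vec o \mid \alpha(\markvar_{j,p,\vec o}) = \top\}$ for every place $p$. Then I would prove by induction on $j$ that (i) $b_j$ is a well-defined binding for $\widehat t_j$ and $M_{j-1}$, (ii) $\widehat t_j$ is enabled in $M_{j-1}$ with binding $b_j$, and (iii) $M_{j-1} \goto{\widehat t_j, b_j} M_j$, i.e. the firing rule is respected. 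Since the run does not need to reach a final marking for this lemma (that is part of the alignment correctness, not of being a run), only $\varphi_{\mathit{move}}$, $\varphi_{\mathit{rem}}$, $\varphi_{\mathit{type}}$ and $\varphi_{\mathit{fresh}}$ are really needed here.

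For the step, I would unfold the constraint $\varphi_{\mathit{move}}$ instantiated at the value $l$ with $t_l = \widehat t_j$. From $\varphi_{\mathit{type}}$ one obtains that every object index is instantiated by an object of the correct type (and is $0$ exactly when unused), so $b_j$ as built in \defref{decoded:run} is a type-preserving function on $\invars{\widehat t_j} \cup \outvars{\widehat t_j}$; the list-variable part is well-defined because the convention on the last $M$ object slots makes the instantiation of the (at most one) list variable unambiguous. Enabledness, $\vec b_j(\inflow(p,\widehat t_j)) \subseteq M_{j-1}(p)$ for $p \in \pre{\widehat t_j}$, follows from the $\constoken$ conjuncts of $\varphi_{\mathit{move}}$: whenever a token is named by $\constoken(p,\widehat t_j,j,\vec o)$ — which, by the encoding of $\constoken$, is exactly the case when $\vec o \in \vec b_j(\inflow(p,\widehat t_j))$ — the constraint forces $\markvar_{j-1,p,\vec o}$, i.e. $\vec o \in M_{j-1}(p)$. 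For the firing equation, on consumed places the $\constoken$ conjuncts give $\neg\markvar_{j,p,\vec o}$ (for $p \in \pre{\widehat t_j}\setminus\post{\widehat t_j}$), on produced places the $\prodtoken$ conjuncts give $\markvar_{j,p,\vec o}$, and for every token/place pair \emph{not} touched by the transition, $\varphi_{\mathit{rem}}$ gives $\markvar_{j-1,p,\vec o} \leftrightarrow \markvar_{j,p,\vec o}$; combining these yields exactly $M_j(p) = (M_{j-1}(p) \setminus \vec b_j(\inflow(p,\widehat t_j))) \cup \vec b_j(\outflow(\widehat t_j,p))$, matching \defref{enabled} and the firing rule. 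The freshness side-condition on $b_j$ — that $b_j(\nu)$ is injective on $\nu$-variables and does not occur in $M_{j-1}$ — is delivered by $\varphi_{\mathit{fresh}}$.

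The main obstacle I anticipate is the careful bookkeeping around the $\constoken$/$\prodtoken$ shorthands in the variable-flow case: one has to argue that the set of object tuples $\vec o$ for which $\constoken(p,\widehat t_j,j,\vec o)$ holds under $\alpha$ coincides \emph{exactly} with $\vec b_j(\inflow(p,\widehat t_j))$ as defined by the extension of bindings to template inscriptions, i.e. with $\{\tup{\dots, u, \dots} \mid u \in b_j(V_1)\}$. This requires matching the disjunction $\bigvee_{i=1}^x \objvar_{j,\ell_i} = id(\vec o_1)$ in the encoding against the list $b_j(V_1) = [O_{\alpha(\objvar_{j,|O|-M+1})}, \dots]$ recovered in \defref{decoded:run}, and checking that no ``spurious'' tuple is consumed and no required tuple is missed — in particular that unused list slots ($\objvar_{j,\ell_i} = 0$) do not contribute. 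Once this correspondence is nailed down, the rest is a routine propagation of the boolean constraints; I would also remark that the partial-function character of $\inflow$ and $\outflow$ causes no trouble, since the big conjunctions in $\varphi_{\mathit{move}}$ and $\varphi_{\mathit{rem}}$ already range only over $p \in \pre{t_l}$ resp. $p \in \post{t_l}$ resp. all $p$.
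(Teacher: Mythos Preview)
Your proposal is essentially the paper's own proof: define $M_j(p)=\{\vec o\mid \alpha(\markvar_{j,p,\vec o})=\top\}$, then show by induction on $j$ that $M_{j-1}\goto{\widehat t_j,b_j}M_j$, using $\varphi_{\mathit{type}}$ and $\varphi_{\mathit{fresh}}$ for well-definedness of $b_j$, $\varphi_{\mathit{move}}$ for enabledness, and $\varphi_{\mathit{rem}}$ for the marking update. Your anticipated ``obstacle'' about matching $\constoken$/$\prodtoken$ with $\vec b_j$ in the template case is real but routine, and the paper in fact glosses over it even more briefly than you do.

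One small discrepancy: you assert that $\varphi_{\mathit{init}}$ and $\varphi_{\mathit{fin}}$ are not needed because ``being a run'' does not require reaching a final marking. By the bare definition of \emph{run} you are right, but the paper's proof of this lemma does invoke both constraints at the end to conclude that $M_0$ is initial and $M_n$ is final, so that $\rho_\alpha$ is a run \emph{of the accepting net} $\NN$ in the sense later needed for $\rho_v\in\LL(\NN)$ in the alignment theorem. If you omit this here, you must supply it in the theorem instead; either placement works, but be aware the paper bundles it into this lemma.
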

\begin{proof}
We define a sequence of markings $M_0, \dots, M_n$.
Let $M_j$, $0 \leq j \leq n$, be the marking such that $M_j(p)= \{\vec o \mid \vec o \in \vec O_{\coloring(p)} \text{ and } \alpha(\markvar_{j,p,\vec o})=\top\}$.
Then, we can show by induction on $j$ that for the process run $\rho_j = \tup{f_1, \dots, f_j}$ 
it holds that  $M_0\goto{\rho_j} M_j$.
\begin{itemize}
\item[\textit{Base case.}] If $n=0$, then $\rho_0$ is empty, so the statement is trivial. 
\item[\textit{Inductive step.}] Consider $\rho_{j+1} = \tup{f_1, \dots, f_{j+1}}$ and
suppose that for the prefix
$\rho' = \tup{f_1, \dots, f_j}$
it holds that $M_0 \goto{\rho'} M_j$.
We have $f_{j+1} = (\widehat t,b)$ and $\widehat t=t_i$ for some $i$ such that $1\,{\leq}\,i\,{\leq}\,|T|$ with $\alpha(\transvar_j) = i$.
First, we note that $b$ is a valid binding: as $\alpha$ satisfies \eqref{eq:phi:type}, it assigns a non-zero value to all $\objvar_{j,k}$ such that $v_k \in \invars{t_i} \cup \outvars{t_i}$ that are not of list type (and hence needed), and by $\eqref{eq:phi:type}$, the unique object $o$ with $id(o)=\alpha(\objvar_{j,k})$ has the type of $v_k$.
Similarly, $b$ assigns a list of objects of correct type to a variable in
$(\invars{t_i} \cup \outvars{t_i}) \cap \listvarset$, if such a variable exists.
Moreover, \eqref{eq:phi:fresh} ensures that variables in $(\invars{t_i} \cup \outvars{t_i}) \cap \nuvarset$ are instantiated with objects that did not occur in $M_j$.

Since $\alpha$ is a solution to $\eqref{eq:constraints}$, it satisfies \eqref{eq:phi:tokens}, so that $t_i$ is enabled in $M_n$.
As $\alpha$ satisfies \eqref{eq:phi:inertia}, the new marking $M_{j+1}$ contains only either tokens that were produced by $t_i$, or tokens that were not affected by $t_i$. Thus, $M_j\goto{f_{j+1}} M_{j+1}$, which concludes the induction proof.

\end{itemize}
Finally, as $\alpha$ satisfies 
\eqref{eq:phi:initial} and \eqref{eq:phi:final}, it must be that $M_0=M_I$ and the last marking must be final, so $\rho_\alpha$ is a run of $\NN$.
\qed
\end{proof}

\begin{theorem}
Given \anet $\NN$, trace graph $\tracenet_X$, and satisfying assignment $\alpha$ to $\eqref{eq:constraints}$, $\Gamma(\alpha)$ is an optimal alignment of $\tracenet_X$ and the run $\rho_\alpha$  with cost $\alpha(\distvar_{m,n})$.
\end{theorem}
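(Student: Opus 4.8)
The plan is to establish three things in turn: that $\rho_\alpha$ is a run with $(\rho_\alpha)_v\in\LL(\NN)$ and that $\Gamma(\alpha)$ is a well-formed alignment of $\tracenet_X$ and $\rho_\alpha$; that $\cost(\Gamma(\alpha))=\alpha(\distvar_{m,n})$; and that no alignment of $\tracenet_X$ and $\NN$ has strictly smaller cost. For the first point, \lemref{decode:run} already gives that $\rho_\alpha$ is a run of $\NN$, and since $\alpha$ satisfies \eqref{eq:phi:final} the prefix of $\rho_\alpha$ reaching a final marking is accepting, so $(\rho_\alpha)_v\in\LL(\NN)$. It then remains to check the clauses of the alignment definition of \secref{nets} for $\Gamma(\alpha)=\Gamma(\gamma_{m,n})$. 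I would first argue, by a routine induction on the recursion of \defref{decoded:alignment}, that $\gamma_{m,n}$ lists every event $e_1,\dots,e_m$ exactly once as a non-$\SKIP$ log component and every firing $f_1,\dots,f_n$ exactly once as a model component, in these respective orders; hence the node sets of $\Gamma(\alpha)\restrlog$ and $\Gamma(\alpha)\restrmod$ are $E_X$ and $\{f_1,\dots,f_n\}$. Since the edge set in \defref{decoded:alignment} places exactly the $\tracenet_X$-edges between log components and exactly the consecutive-firing pairs of $\rho_\alpha$ between model components, taking projections returns $\tracenet_X$ on the log side and, with the identity bijection $(t_j,b_j)\mapsto f_j$, a graph meeting the conditions that tie the model projection to $\rho_\alpha$. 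Acyclicity follows by tagging each move with the stage index pair $(i,j)$ at which it was appended: both edge kinds strictly advance one coordinate and never decrease the other, because $\tracenet_X$-edges respect the timestamp order $e_1,\dots,e_m$ and $\rho_\alpha$ is linearly ordered.

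For the cost, I would prove $\alpha(\distvar_{i,j})=\cost(\Gamma(\gamma_{i,j}))$ by induction on $i+j$, using that $\alpha$ satisfies \eqref{eq:delta}. The base cases are immediate; in the inductive step $\alpha(\distvar_{i+1,j+1})$ is the minimum of the three options, the branch chosen in \defref{decoded:alignment} realizes this minimum, and the added penalty equals the cost of the appended move by \defref{cost} --- $[P_L]_{i+1}=|\projobj(e_{i+1})|$ is the log-move cost, $[P_M]_{j+1}$ is $0$ for a silent firing and otherwise the number of objects it uses (the model-move cost), and $[P_=]_{i+1,j+1}$ is $0$ exactly when the activities agree, so that the synchronous branch (cost $0$) is taken only when legitimate, and $\infty$ otherwise. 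Instantiating $i=m$ and $j=n$ gives $\cost(\Gamma(\alpha))=\alpha(\distvar_{m,n})$.

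For optimality, let $\Gamma'$ be an arbitrary alignment of $\tracenet_X$ and $\NN$. By \lemref{bounds:precise} there is an \emph{optimal} alignment $\Gamma^*$ whose model projection has at most $n$ moves and all of whose object occurrences lie in $O$ --- this is precisely what the a priori choice of $n$, $O$, $K$ before the encoding, together with \remref{OPIfromtheirnet}, is set up to guarantee --- so it suffices to bound $\alpha(\distvar_{m,n})$ by $\cost(\Gamma^*)$. From $\Gamma^*$ I would extract its underlying run $\rho^*$ together with a linearization of its move graph that reads the events in timestamp order $e_1,\dots,e_m$ and the firings in $\rho^*$-order; such a linearization exists because $\Gamma^*\restrlog=\tracenet_X$ and $\Gamma^*\restrmod$ is the linear order of $\rho^*$, and it constitutes a monotone sequence alignment of $\tup{e_1,\dots,e_m}$ with $\rho^*$ of the same cost as $\Gamma^*$. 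Encoding $\rho^*$ (using the run-length variable to absorb the case of a run shorter than $n$) gives an assignment $\beta$ satisfying $\varphi_{\mathit{run}}$, and the standard edit-distance argument for \eqref{eq:delta} forces $\beta(\distvar_{m,n})$ to be at most the cost of that monotone alignment, hence at most $\cost(\Gamma^*)$. Since $\alpha$ minimizes $\distvar_{m,n}$ among satisfying assignments of \eqref{eq:constraints}, we get $\cost(\Gamma(\alpha))=\alpha(\distvar_{m,n})\leq\beta(\distvar_{m,n})\leq\cost(\Gamma^*)\leq\cost(\Gamma')$, so $\Gamma(\alpha)$ is optimal.

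The hard part will be the optimality step, and specifically the passage from the graph-shaped alignments of \secref{nets} to the sequence-shaped edit distance of \eqref{eq:delta}: one must appeal to \lemref{bounds:precise} to know that some optimal alignment fits the a priori parameters, show that an arbitrary alignment graph linearizes to a cost-preserving monotone sequence alignment --- which uses $\Gamma'\restrlog=\tracenet_X$ and $\Gamma'\restrmod$ being a run to rule out contradictory orderings of log and model components --- and carry runs of length below $n$ through the encoding. The claims that $\rho_\alpha$ is a run and that the cost is computed correctly are by comparison routine once the bookkeeping of \defref{decoded:run} and \defref{decoded:alignment} is unwound.
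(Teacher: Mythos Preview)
Your proof is correct and, in one respect, more explicit than the paper's. The paper argues by a single induction on $(i,j)$, showing simultaneously that $\Gamma(\gamma_{i,j})$ is a valid \emph{and optimal} alignment of the restricted trace graph $T_X|_i$ against the run prefix $\rho_\alpha|_j$ with cost $\alpha(\distvar_{i,j})$; here ``optimal'' is taken relative to the \emph{fixed} run $\rho_\alpha$, and the passage to global optimality over all runs is left implicit in the fact that $\alpha$ already solves the minimization problem $(\Phi)$. You instead decompose the argument into three independent pieces: well-formedness of $\Gamma(\alpha)$, the cost identity $\cost(\Gamma(\gamma_{i,j}))=\alpha(\distvar_{i,j})$ by induction on $i+j$, and global optimality obtained by encoding an arbitrary optimal competitor $\Gamma^*$ (which fits the a-priori parameters by \lemref{bounds:precise}) as a satisfying assignment $\beta$ and then invoking the minimality of $\alpha$. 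Your route makes the role of the bounds $n$, $O$, $K$ and of the minimization in $(\Phi)$ fully explicit, at the price of the extra bookkeeping needed to linearize a graph-shaped alignment into a monotone move sequence; the paper's combined induction is more compact but trusts the reader to bridge per-run optimality and the outer minimization. One small quibble: the appeal to \remref{OPIfromtheirnet} in your optimality step is misplaced---that remark is about translating object-centric nets into \mynets and plays no role in bounding alignments.
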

\begin{proof}
By \lemref{decode:run}, $\rho_\alpha$ is a run of $\NN$.
We first note that $[P_=]$, $[P_L]$, and $[P_M]$ are correct encodings of 
$P_=$, $P_L$, and $P_M$ from \defref{cost}, respectively. For $P_L$ this is clear.
For $P_=$, $\mathit{is\_labelled}(j,a)$ is true iff the value of $\transvar_j$ corresponds to a transition
that is labeled $a$. If the labels match, cost $0$ is returned, otherwise $\infty$.
For $P_M$, the case distinction returns cost $0$ if the $j$th transition is silent; otherwise, the expression 
$\Sigma_{k=1}^K ite(\objvar_{j,k} \neq 0, 1, 0)$ 
counts the number of objects involved in the model step, using the convention that if fewer than $k$ objects are involved in the $j$th transition then $\objvar_{j,k}$ is assigned 0.

Now, let $d_{i,j} = \alpha(\distvar_{i,j})$, for all $i$, $j$ such that $0 \leq i \leq m$ and $0 \leq j \leq n$.
Let again $\logtrace=\tup{e_1, \dots, e_m}$ be the sequence ordering the nodes in $T_X$ as in \eqref{eq:ordered:trace}. 
Let $T_X|_i$ be the restriction of $T_X$ to the node set $\{e_1, \dots, e_i\}$.
We show the stronger statement that $\Gamma(\gamma_{i,j})$ is an optimal alignment of $T_X|_i$ and $\rho_\alpha|_j$ with cost $d_{i,j}$, by induction on $(i,j)$. 

\begin{itemize}
\item[\textit{Base case.}] If $i\,{=}\,j\,{=}\,0$, 
then $\gamma_{i,j}$ is the trivial, empty alignment of an empty log trace and an empty process run, which is clearly optimal with cost
$d_{i,j}\,{=}\,0$, as defined in \eqref{eq:delta}.
\item[\textit{Step case.}]
If $i\,{=}\,0$ and $j\,{>}\,0$, then $\gamma_{0,j}$ is a sequence of model moves 
$\gamma_{0,j}=\tup{(\SKIP, f_{1}),\ldots, (\SKIP, f_{j}) }$ according to \defref{decoded:alignment}.
Consequently, $\Gamma(\alpha)=\Gamma(\gamma_{0,j})$ has edges $(\SKIP, f_{h}),\ldots, (\SKIP, f_{h+1})$ for all $h$, $1\leq h < j$, which is a valid and optimal alignment of the empty log trace and $\rho_\alpha$.
By \defref{cost}, the cost of $\Gamma(\alpha)$ is the number of objects involved in 
non-silent transitions of $f_1, \dots, f_j$, which coincides with
$\alpha([P_M]_1 + \dots + [P_M]_j)$, as stipulated in \eqref{eq:delta}.
\item[\textit{Step case.}]
If $j\,{=}\,0$ and $i\,{>}\,0$, then $\gamma_{i,0}$ is a sequence of log moves 
$\gamma_{i,0}=\tup{(e_{1},\SKIP),\ldots, (e_{j}, \SKIP) }$ according to \defref{decoded:alignment}.
Thus, $\Gamma(\alpha)=\Gamma(\gamma_{i,0})$ is a graph whose log projection
coincides by definition with $T_X|_i$.
By \defref{cost}, the cost of $\Gamma(\alpha)$ is the number of objects involved in 
$e_1, \dots, e_i$, which coincides with
$\alpha([P_L]_1 + \dots + [P_L]_j)$, as stipulated in \eqref{eq:delta}.
\item[\textit{Step case.}] 
If $i\,{>}\,0$ and  $j\,{>}\,0$, 
$d_{i,j}$ must be the minimum of
$\alpha([P_=]_{i, j}){+}d_{i-1,j-1}$,
$\alpha([P_L]){+}d_{i-1,j}$, 
and $\alpha([P_M]_{j}){+}d_{i,j-1}$.
We can distinguish three cases:
\begin{compactitem}
\item
Suppose $d_{i,j} = \alpha([P_L]_i) + d_{i-1,j}$.
By \defref{decoded:alignment},
we have $\gamma_{i,j} = \gamma_{i-1,j} \cdot \tup{e_{i}, \SKIP}$.
Thus, $\Gamma(\gamma_{i,j})$ extends $\Gamma(\gamma_{i-1,j})$ by a node
$\tup{e_{i}, \SKIP}$, and edges to this node as induced by $T_X|_i$.
By the induction hypothesis, $\Gamma(\gamma_{i-1,j})$ is a valid and optimal alignment
of $T_X|_{i-1}$ and $\rho_\alpha|_{j}$ with cost $d_{i-1,j}$.
Thus $\Gamma(\gamma_{i,j})$ is a valid alignment of $T_X|_i$ and $\rho_\alpha|_{j}$,
because the log projection coincides with $T_X|_i$ by definition.
By minimality of the definition of $d_{i,j}$, also $\Gamma(\gamma_{i,j})$ is optimal.
\item
Suppose $d_{i,j} = \alpha([P_M]_{j}) + d_{i,j-1}$.
By \defref{decoded:alignment},
we have $\gamma_{i,j} = \gamma_{i,j-1} \cdot \tup{\SKIP, f_j}$.
By the induction hypothesis, $\Gamma(\gamma_{i,j-1})$ is a valid and optimal alignment
of $T_X|_{i}$ and $\rho_\alpha|_{j-1}$ with cost $d_{i,j-1}$.
Thus, $\Gamma(\gamma_{i,j-1})$ must have a node $\tup{r,f_{j-1}}$, for some $r$.
The graph $\Gamma(\gamma_{i,j})$ extends $\Gamma(\gamma_{i,j-1})$ by a node
$\tup{\SKIP, f_{j}}$, and an edge $\tup{r,f_{j-1}} \to \tup{\SKIP, f_{j}}$.
Thus, $\Gamma(\gamma_{i,j})$ is a valid alignment for $T_X|_{i}$ and $\rho_\alpha|_{j}$, and by minimality it is also optimal.
\item
Let $d_{i,j} = \alpha([P_=]_{i, j}) + d_{i-1,j-1}$.
By \defref{decoded:alignment},
we have $\gamma_{i,j} = \gamma_{i-1,j-1} \cdot \tup{e_i, f_j}$.
By the induction hypothesis, $\Gamma(\gamma_{i-1,j-1})$ is an optimal alignment
of $T_X|_{i-1}$ and $\rho_\alpha|_{j-1}$ with cost $d_{i-1,j-1}$.
In particular, $\Gamma(\gamma_{i-1,j-1})$ must have a node $\tup{r,f_{j-1}}$, for some $r$. The graph $\Gamma(\gamma_{i,j})$ extends $\Gamma(\gamma_{i-1,j-1})$ by a node
$\tup{e_1, f_{j}}$, an edge $\tup{r,f_{j-1}} \to \tup{e_i, f_{j}}$, as well as edges to $\tup{e_1, f_{j}}$ as induced by $T_X|_i$.
Thus $\Gamma(\gamma_{i,j})$ is a valid alignment of $T_X|_i$ and $\rho_\alpha|_{j}$,
because the log projection coincides with $T_X|_i$ by definition, and the model projection has the required additional edge.
By minimality of the definition of $d_{i,j}$, also $\Gamma(\gamma_{i,j})$ is optimal.
\end{compactitem}
%
\end{itemize}
For the case $i=m$ and $j=n$, we obtain that $\Gamma(\alpha)=\Gamma(\gamma_{m,n})$ is an optimal alignment of $\tracenet_X$ and $\rho_\alpha$ with cost $d_{m,n}=\alpha(\distvar_{m,n})$.
\qed
\end{proof}

\end{document}